%
\documentclass[runningheads]{llncs}
\usepackage{graphicx}
%

\usepackage{underscore}           
\usepackage{amsmath}
\usepackage{amssymb}
\usepackage{enumerate}
\usepackage{verbatim}
\usepackage{mathtools}


\renewcommand{\phi}{\varphi}
\newcommand*{\Nset}{\mathbb{N}}

\newcommand*{\size}{\mathrm{sz}}

\newcommand*{\FO}{\mathrm{FO}}
\newcommand*{\MSO}{\mathrm{MSO}}

\newcommand*{\tower}{\mathrm{twr}}
\newcommand*{\Ntower}{\mathrm{tower}}

\newcommand*{\oset}{\mathtt{oset}}

\newcommand*{\set}{\mathtt{set}}
\newcommand*{\setc}{\mathtt{core}}

\newcommand*{\lev}{\mathtt{levels}}
\newcommand*{\add}{\mathtt{add}}

\newcommand*{\EF}{\text{Ehrenfeucht-Fra\"{\i}ss\'{e}}}

\newcommand{\qrfo}{\mathrm{qr}}

\newcommand{\mlm}{\mu}
\newcommand{\Mlm}{\nu}
\newcommand{\defN}{\mathrm{DN}}
\newcommand{\Def}{\mathrm{Def}}

\newcommand{\LS}{\mathrm{LS}}
\newcommand{\Hanf}{\mathrm{H}}
\newcommand{\tp}{\mathrm{tp}}

\newenvironment{customthm}[1]
  {\innercustomthm}
  {\endinnercustomthm}

\begin{document}
\title{Defining long words succinctly in FO and MSO\thanks{Miikka Vilander acknowledges the financial support of the Academy of Finland project \emph{Explaining AI via Logic} (XAILOG), project number 345612.}}
%
%
\author{Lauri Hella\orcidID{0000-0002-9117-8124} \and
Miikka Vilander\orcidID{0000-0002-7301-939X}}
\authorrunning{L. Hella and M. Vilander}
%
\institute{Tampere University, 33100 Tampere, Finland}
\maketitle              
\begin{abstract}
We consider the length of the longest word definable in FO and MSO via a formula of size $n$. For both logics we obtain as an upper bound for this number an exponential tower of height linear in $n$. We prove this by counting types with respect to a fixed quantifier rank. As lower bounds we obtain for both FO and MSO an exponential tower of height in the order of a rational power of $n$. We show these lower bounds by giving concrete formulas defining word representations of  levels of the cumulative hierarchy of sets. In addition, we consider the Löwenheim-Skolem and Hanf numbers of these logics on words and obtain similar bounds for these as well.

\keywords{Logic on words \and Monadic second-order logic \and Succinctness.}
\end{abstract}
\section{Introduction}

We consider the succinctness of defining words. More precisely, if we allow formulas of size up to $n$ in some logic, we want to know the length of the longest word definable by such formulas. 

This question is not very interesting for all formalisms. An example where this is the case is given by regular expressions. There is no smaller regular expression that defines a word than the word itself. This result is spelled out at least in the survey \cite{EllulKSW2005}. However, the situation is completely different for monadic second-order logic MSO, even though MSO has the same expressive power as regular expressions over words. In this paper we consider the definability of words in MSO as well as first-order logic FO. We show that these logics can define words of non-elementary length via formulas of polynomial size.

In order to argue about definability via formulas of bounded size, we define the size $n$ fragments $\FO[n]$ and $\MSO[n]$ that include only formulas of size up to $n$. We also define similar quantifier rank $k$ fragments $\FO_k$ and $\MSO_k$ and use them to prove our upper bounds. Both of these types of fragments are essentially finite in the sense that they contain only a finite number of non-equivalent formulas. We call the length of the longest word definable in a fragment the definability number of that fragment. Using this concept, our initial question is reframed as studying the definability numbers of $\FO[n]$ and $\MSO[n]$.

The definability number of a fragment is closely related to the Löwenheim-Skolem and Hanf numbers of the fragment. The Löwenheim-Skolem number of a fragment is the smallest number $m$ such that each satisfiable formula in the fragment has a model of size at most $m$. The Hanf number is the smallest number $l$ such that any formula with a model of size greater than $l$ has arbitrarily large models. These were originally defined for extensions of first-order logic in the context of model theory of infinite structures, but they are also meaningful in the context of finite structures. For a survey on Löwenheim-Skolem and Hanf numbers both on infinite and finite structures see \cite{EbbSurvey}. For previous research on finite Löwenheim-Skolem type results see \cite{Grohe96} and \cite{Grohe02}.

Aside from what we have already mentioned, related work includes the article \cite{PikhurkoV05} of Pikhurko and Verbitsky, where they consider the complexity of single finite structures. They study the minimal quantifier rank in FO of both defining a single finite structure and separating it from other structures of the same size. In \cite{PikhurkoV09} the same authors survey the logical complexity of single graphs in FO. By logical complexity they mean minimal quantifier rank, number of variables and length of a defining formula as functions of the size of the graph. They give an extensive account of these measures and relate them to each other, the Ehrenfeucht-Fraïssé game and the Weisfeiler-Lehman algorithm. An important difference between our approach and theirs is that we take formula size as the parameter and look for the longest definable word, whereas they do the opposite. 


Our contributions are upper and lower bounds for the definability, Löwenheim-Skolem and Hanf numbers of the size $n$ fragments of FO and MSO on words. The upper bounds in section \ref{upper} are obtained by counting types with respect to the quantifier rank $n/2$ fragment. The upper bounds for both FO and MSO are exponential towers of height $n/2+\log^*(t)+1$ where $t$ is a polynomial term. The lower bounds in sections \ref{fo} and \ref{mso} are given by concrete polynomial size formulas that define words of non-elementary length. The lower bounds are exponential towers of height $\sqrt[5]{n/c}$ for FO and $\sqrt{n/c}$ for MSO, respectively. 

Note that our results only apply in the context of words. If finite structures over arbitrary finite vocabularies are allowed, then there are no computable upper bounds for the Löwenheim-Skolem or Hanf numbers of the size $n$ fragments of FO. For the Löwenheim-Skolem number, this follows from Trakhtenbrot's theorem\footnote{Trakhtenbrot's theorem states that the finite satisfiability problem of FO is undecidable. Hence there cannot exist any computable upper bound for the size of models that need to be checked to see whether a given formula is satisfiable.} (see, e.g., \cite{Libkin}), and for the Hanf number, this follows from a result of Grohe in \cite{Grohe96}. Clearly the same applies for the size $n$ fragments of MSO as well.

\section{Preliminaries}

The logics we consider in this paper are first-order logic FO and monadic second-order logic MSO and their (typically finite) fragments. The syntax and semantics of these are standard and well-known. Due to space restrictions we will not present them here, instead directing the reader to \cite{EbbFlum} and \cite{Libkin}.

In terms of structures we limit our consideration to words of the two letter alphabet $\Sigma = \{l, r\}$. We interpret these symbols as left and right parentheses but use letters for easier readability. When we say that a word satisfies a logical sentence, we mean the natural corresponding word model does. A word model is a finite structure with linear order and unary predicates $P_l$ and $P_r$ for the two symbols. 

Since we only consider words over the two letter alphabet $\Sigma$, we will tacitly assume that all formulas of $\MSO$ are in the vocabulary $\{<,P_l,P_r\}$ of the corresponding word models (and similarly for $\FO$-formulas).

\begin{definition}
The size $\size(\phi)$ of a formula $\phi \in \MSO$ is defined recursively as follows:
\begin{itemize}
    \item $\size(\phi) = 1$ for atomic $\phi$,
    \item $\size(\psi \land \theta) = \size(\psi \lor \theta) = \size(\psi) + \size(\theta) + 1$,
    \item $\size(\exists x \psi) = \size(\forall x \psi) = \size(\exists U \psi) = \size(\forall U \psi) = \size(\neg \psi) = \size(\psi) + 1$.
\end{itemize}
For $n \in \Nset$ the size $n$ fragment of $\MSO$, denoted $\MSO[n]$, consists of the formulas of $\MSO$ with size at most $n$. Size as well as size $n$ fragments are defined in the same way for $\FO$.
\end{definition}

\begin{definition}
The quantifier rank $\qrfo(\phi)$ of a formula $\phi \in \MSO$ is defined recursively as follows:
\begin{itemize}
    \item $\qrfo(\phi) = 0$ for atomic $\phi$,
    \item $\qrfo(\neg \psi) = \qrfo(\psi)$,
    \item $\qrfo(\psi \land \theta) = \qrfo(\psi \lor \theta) = \max\{\qrfo(\psi),\qrfo(\theta)\}$,
    \item $\qrfo(\exists x \psi) = \qrfo(\forall x \psi) = \qrfo(\exists U \psi) = \qrfo(\forall U \psi) = \qrfo(\psi) + 1$.
\end{itemize}
For $k \in \Nset$, the quantifier rank $k$ fragment of $\MSO$, denoted $\MSO_k$, consists of the formulas $\phi \in \MSO$ with $\qrfo(\phi) \leq k$. The quantifier rank $k$ fragment of $\FO$ is defined in the same way and denoted $\FO_k$.
\end{definition}

Note that both size $n$ fragments and quantifier rank $k$ fragments are essentially finite in the sense that they contain only finitely many non-equivalent formulas.

\begin{definition}
For each (finite) fragment $L$ of $\MSO$ or $\FO$, we define the relation $\equiv_L$ on $\Sigma$-words as
\[
w \equiv_L v, \text{ if $w$ and $v$ agree on all $L$-sentences}.
\]
Clearly $\equiv_L$ is an equivalence relation. We denote the number of equivalence classes of $\equiv_L$ on $\Sigma$-words by $N_L$. 
\end{definition}

Note that each equivalence class of $\equiv_L$ is uniquely determined by a subset $\tp_L(w)=\{\varphi\in L\mid w\models\varphi\}$ of $L$ sentences, which we call the $L$-type of $w$. Thus, $N_L$ is the number of $L$-types. In the case $L=\MSO_k$ or $L=\FO_k$, we talk about quantifier rank $k$ types.

\begin{definition}
We say that a sentence $\phi \in \MSO$ defines a word $w \in \Sigma^+$ if $w \vDash \phi$ and $v \nvDash \phi$ for all $v \in \Sigma^+ \setminus \{w\}$.

For a fragment $L$ of $\MSO$ or $\FO$, we denote by $\Def(L)$ the set of words definable in $L$, i.e.
\[
\Def(L) := \{w \in \Sigma^+ \mid \text{there is } \phi \in L \text{ s.t. $\phi$ defines $w$\}.} 
\]
\end{definition}

\begin{definition}
The exponential tower function $\Ntower: \Nset \to \Nset$ is defined recursively by setting $\Ntower(0) := 1$ and $\Ntower(n+1) := 2^{\Ntower(n)}$. We extend this definition to a function $\tower: [0,\infty[ \to \Nset$ by setting $\tower(x) = \Ntower( \lceil x \rceil)$. The iterated logarithm function $\log^*: [1,\infty[ \to \Nset$ is defined by setting $\log^*(x)$ as the smallest $m \in \Nset$ that has $\Ntower(m) \geq x$.

\end{definition}

\subsection{Definability, Löwenheim-Skolem and Hanf numbers}


Löwenheim-Skolem and Hanf numbers were originally introduced for studying the behaviour of extensions of first-order logic on infinite structures. See the article \cite{EbbSurvey} of Ebbinghaus for a nice survey on the infinite case. As observed in \cite{Grohe96}, with suitable modifications, it is possible to give meaningful definitions for these numbers also on finite structures. We will now give such definitions for finite fragments $L$ of $\FO$ and $\MSO$, and in addition, we introduce the closely related definability number of $L$.

Let $\phi$ be a sentence in $\MSO$ over $\Sigma$-words. If it has a model, we denote by $\mlm(\phi)$ the minimal length of a model of $\phi$: $\mlm(\phi)=\min\{|w|\mid w \in \Sigma^+, w\models\phi\}$. If $\phi$ has no models, we stipulate $\mlm(\phi) = 0$. Furthermore, we denote by $\Mlm(\phi)$ the maximum length of a model of $\phi$, assuming the maximum is well-defined. If the maximum is not defined, i.e., if $\phi$ has no models or has arbitrarily long models, we stipulate $\Mlm(\phi)=0$.

\begin{definition}\label{LS-Hanf}
Let $L$ be a finite fragment of $\MSO$ or $\FO$ with $\Def(L) \neq \emptyset$.

(a) The definability number of $L$ is \\ \phantom{a} \hfill $\defN(L)=
    \max\{|w|\mid w \in \Sigma^+, w\in\Def(L)\}$. \hfill \phantom{a}

(b) The Löwenheim-Skolem number of $L$ is $\LS(L)=\max\{\mlm(\phi)\mid \phi\in L\}$.

(c) The Hanf number of $L$ is $\Hanf(L)=\max\{\Mlm(\phi)\mid \phi\in L\}$.
\end{definition}

Thus, $\defN(L)$ is the length of the longest $L$-definable word. Note further that $\LS(L)$ is the smallest number $m$ such that every $\phi\in L$ that has a model, has a model of length at most $m$. Similarly $\Hanf(L)$ is the smallest number $\ell$ such that if $\phi\in L$ has a model of length greater than $\ell$, then it has arbitrarily long models.

Since every sentence $\phi$ of $\MSO$ defines a regular language over $\Sigma$, and there is an effective translation from $\MSO$ to equivalent finite automata, it is clear that we can compute the numbers $\mlm(\phi)$ and $\Mlm(\phi)$ from $\phi$. Consequently, for any finite fragment $L$ of $\MSO$, $\LS(L)$ and $\Hanf(L)$ can be computed from $L$.

As we mentioned in the Introduction, $\LS(\FO[n])$ and $\Hanf(\FO[n])$ are not computable from $n$ if we consider arbitrary finite models instead of words. Clearly the same holds also for the fragments $\FO_k$, $\MSO[n]$ and $\MSO_k$. 

It follows immediately from Definition \ref{LS-Hanf} that the definability number of any finite fragment of $\MSO$ is bounded above by its Löwenheim-Skolem number and its Hanf number:

\begin{proposition}\label{DNvsLS}
If $L$ is finite fragment of $\MSO$, then 
$\defN(L)\le \LS(L),\Hanf(L)$.
\end{proposition}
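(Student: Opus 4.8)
The plan is to unwind the definitions of $\defN(L)$, $\LS(L)$ and $\Hanf(L)$ from Definition \ref{LS-Hanf} and exhibit, for any word $w \in \Def(L)$, a witnessing sentence in $L$ whose minimal model and maximal model both have length exactly $|w|$. Concretely, if $w \in \Def(L)$, then by definition there is some $\phi \in L$ that defines $w$, meaning $w \models \phi$ and $v \nmodels \phi$ for every $v \in \Sigma^+ \setminus \{w\}$. Hence $w$ is the \emph{unique} model of $\phi$ among $\Sigma$-words.

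From this uniqueness, both claimed inequalities follow. First, since $\phi$ has a model (namely $w$) and its only model has length $|w|$, we get $\mlm(\phi) = |w|$; by part (b) of Definition \ref{LS-Hanf}, $\LS(L) = \max\{\mlm(\psi) \mid \psi \in L\} \ge \mlm(\phi) = |w|$. Second, $\phi$ has a model but does not have arbitrarily long models (its only model is $w$), so the maximum length of a model of $\phi$ is well-defined and equals $|w|$, i.e.\ $\Mlm(\phi) = |w|$; by part (c), $\Hanf(L) = \max\{\Mlm(\psi) \mid \psi \in L\} \ge \Mlm(\phi) = |w|$. Therefore $|w| \le \LS(L)$ and $|w| \le \Hanf(L)$ for every $w \in \Def(L)$.

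Finally, taking the maximum over all $w \in \Def(L)$ — which is legitimate since $\Def(L) \neq \emptyset$ is part of the hypothesis of Definition \ref{LS-Hanf}, and $L$ being finite guarantees the maximum is attained — yields $\defN(L) = \max\{|w| \mid w \in \Def(L)\} \le \LS(L)$ and likewise $\defN(L) \le \Hanf(L)$, which is exactly the statement of the proposition.

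This argument is essentially bookkeeping with the definitions, so I do not anticipate a genuine obstacle. The only point requiring a moment's care is the stipulation in the preliminaries that $\Mlm(\phi) = 0$ when $\phi$ has arbitrarily long models or no models: one must observe that a defining sentence falls into neither degenerate case, so that the ``real'' value $\Mlm(\phi) = |w|$ is used rather than the stipulated $0$. Since a word-defining formula has exactly one model, this is immediate.
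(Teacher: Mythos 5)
Your proof is correct and is essentially the same as the paper's one-line argument: the defining sentence $\phi$ has $w$ as its unique model, so $\mlm(\phi)=\Mlm(\phi)=|w|$, and taking maxima gives both inequalities. You have merely spelled out the bookkeeping (including the correct observation that the degenerate stipulation $\Mlm(\phi)=0$ does not apply) that the paper leaves implicit.
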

\begin{proof}
It suffices to observe that if $w \in\Def(L)$, then $\mlm(\phi)=\Mlm(\phi)=|w|$, where $\phi \in L$ is the formula that defines $w$.
\end{proof}

\section{Upper bounds for the length of definable words}\label{upper}

\subsection{Definability and types}

It is well-known that equivalence of words up to a quantifier rank is preserved in catenation:

\begin{theorem}\label{preservation}
Let $L\in\{\FO_k,\MSO_k\}$ for some $k\in\Nset$. Assume that $v,v',w,w'\in \Sigma^+$ are words such that $v\equiv_L v'$ and $w\equiv_L w'$. Then $v w\equiv_L v' w'$.
\end{theorem}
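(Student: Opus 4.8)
The plan is to use the Ehrenfeucht-Fra\"iss\'e game characterization of $\equiv_L$ for $L \in \{\FO_k, \MSO_k\}$: two words are $L$-equivalent if and only if Duplicator wins the $k$-round (MSO or FO) EF game on the corresponding word models. So it suffices to show that if Duplicator wins the $k$-round game on $(v,v')$ and also on $(w,w')$, then Duplicator wins the $k$-round game on $(vw, v'w')$. The key structural observation is that the word model for $vw$ is the ordered sum of the word models for $v$ and $w$: its universe is the disjoint union of the positions of $v$ and those of $w$, with every $v$-position below every $w$-position, and with the unary predicates $P_l, P_r$ inherited componentwise. Thus this is an instance of the standard fact that EF-equivalence is a congruence for ordered sums of structures; I would prove it directly for this case rather than invoking a black box.

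First I would set up the composed strategy. Duplicator plays the $k$-round game on $(vw, v'w')$ by maintaining two auxiliary plays in parallel: one in the $k$-round game on $(v,v')$ and one in the $k$-round game on $(w,w')$, using the assumed winning strategies in each. When Spoiler picks an element (in the FO game) or a set (in the MSO game) in one of the four big words, Duplicator splits the move into its ``$v$-part'' and its ``$w$-part'' and feeds each part to the corresponding auxiliary game as Spoiler's move there; Duplicator's winning strategies in the two small games produce responses, which are then recombined (via the same disjoint-union identification) into a single response in the big game. For MSO this requires noting that a subset of the positions of $vw$ is exactly a pair consisting of a subset of the positions of $v$ and a subset of the positions of $w$, so set-moves split cleanly; element-moves are the special case of singletons.

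Next I would verify that after $k$ rounds the resulting position in the big game is a partial isomorphism, i.e. Duplicator wins. The unary predicates $P_l, P_r$ are handled because the auxiliary games preserve them on each side and membership is inherited. The order relation is the only subtle point: for two chosen elements both lying in the $v$-part, or both in the $w$-part, the order is correct because the corresponding auxiliary game preserved it; for a pair with one element in the $v$-part and one in the $w$-part, the order holds automatically on both sides, since in $vw$ every $v$-position precedes every $w$-position and likewise in $v'w'$, and Duplicator's splitting always sends a $v$-element to the $v$-side and a $w$-element to the $w$-side. (One should also note that neither $v$, $v'$, $w$, nor $w'$ is empty, so these "sides" are genuinely the factors and the split is well-defined; this is where the hypothesis $v,v',w,w' \in \Sigma^+$ is used, though in fact the argument degenerates gracefully even if one factor were empty.)

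The main obstacle, and the only place the proof is more than bookkeeping, is making the bookkeeping precise for the MSO game: one has to be careful that a single round in the big game corresponds to a single round in exactly one of the two auxiliary games while the other auxiliary game "idles" (Duplicator makes no move there that round), yet after all $k$ rounds both auxiliary positions are still legal game positions of length $\le k$ to which the winning strategies apply. This is handled by observing that a prefix of a play consistent with a winning strategy is again consistent with that winning strategy, so idling rounds cause no difficulty. Once this is set up, checking that the combined map is a partial isomorphism is the routine case analysis sketched above, and the theorem follows.
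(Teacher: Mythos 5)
Your proposal is correct and is essentially the same argument the paper relies on: the paper's proof simply cites the standard $\EF$\ game composition for ordered sums (Proposition 2.1.4 in \cite{EbbFlum}), which is exactly the strategy-splitting construction you spell out. One small tidy-up: in the MSO game a set move actually advances \emph{both} auxiliary games (each receives its, possibly empty, part of the chosen set), which is needed so that later element moves preserve membership in earlier sets; the idling you worry about occurs only for element moves, and your prefix-of-a-winning-play observation handles that case correctly.
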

\begin{proof}
The claim is proved by a straightforward $\EF$\ game argument (see Proposition 2.1.4 in \cite{EbbFlum}).
\end{proof}


Using Theorem \ref{preservation}, we get the following upper bounds for the numbers $\mlm(\phi)$ and $\Mlm(\phi)$ in terms of the quantifier rank of $\phi$:

\begin{proposition}\label{mlm-bound}
Let $L\in\{\FO_k,\MSO_k\}$ for some $k\in\Nset$. If $\phi$ is a sentence of $L$, then $\mlm(\phi),\Mlm(\phi)\le N_L$.
\end{proposition}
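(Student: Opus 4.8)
The plan is to use a pigeonhole argument on prefixes of a model. Suppose $\phi \in L$ has a model $w$ with $|w| > N_L$. Write $w = a_1 a_2 \cdots a_m$ with $m > N_L$, and consider the $m+1$ prefixes $w_0 = \varepsilon, w_1 = a_1, w_2 = a_1 a_2, \dots, w_m = w$ (or, to stay within $\Sigma^+$, the $m$ nonempty prefixes $w_1, \dots, w_m$, which already exceed $N_L$). Since there are only $N_L$ many $\equiv_L$-classes, two distinct nonempty prefixes $w_i$ and $w_j$ with $i < j$ satisfy $w_i \equiv_L w_j$. Writing $w = w_i \cdot u \cdot z$ where $u = a_{i+1} \cdots a_j$ and $z = a_{j+1} \cdots a_m$, Theorem~\ref{preservation} (applied with the trivial equivalence $z \equiv_L z$, or noting $z$ may be empty) gives $w = w_j z \equiv_L w_i z$, and $w_i z$ is strictly shorter than $w$. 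Iterating, we obtain a model of $\phi$ of length at most $N_L$, which shows $\mlm(\phi) \le N_L$.

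For the bound on $\Mlm(\phi)$, I would argue the contrapositive of the case analysis: either $\phi$ has no model with more than $N_L$ letters, in which case $\Mlm(\phi) \le N_L$ directly, or $\phi$ has some model $w$ with $|w| > N_L$. In the latter case, the pigeonhole argument above produces a nonempty infix $u$ (the block $a_{i+1}\cdots a_j$) with $w_i \equiv_L w_i u$. Then by repeated application of Theorem~\ref{preservation}, $w_i u^p \equiv_L w_i$ for every $p \ge 1$, hence $w_i u^p z \equiv_L w_i z \equiv_L w$ for all $p$; since $u$ is nonempty these are arbitrarily long models of $\phi$, so by the convention in the definition $\Mlm(\phi) = 0 \le N_L$. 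In all cases $\Mlm(\phi) \le N_L$.

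The one technical point to handle carefully is the empty word: the preservation theorem as stated is about words in $\Sigma^+$, so I must make sure the pieces I feed into it are nonempty. This is why I restrict attention to the $m \ge N_L + 1$ \emph{nonempty} prefixes rather than including $\varepsilon$, and why the infix $u$ and the prefix $w_i$ are automatically nonempty; the suffix $z$ may be empty, in which case $w_i u^p z = w_i u^p$ and no appeal to Theorem~\ref{preservation} on $z$ is needed at all. I expect this bookkeeping to be the only real obstacle; the core argument is a direct pigeonhole combined with the congruence property of $\equiv_L$ under catenation.
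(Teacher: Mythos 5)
Your proof is correct and follows essentially the same route as the paper: a pigeonhole argument on the more than $N_L$ nonempty prefixes of a long model yields two $\equiv_L$-equivalent initial segments, and Theorem~\ref{preservation} then lets you delete the intervening block to shorten the model (giving $\mlm(\phi)\le N_L$) or pump it to get arbitrarily long models (giving $\Mlm(\phi)=0$). The paper phrases the same step as swapping the corresponding end segments, and your extra care about the empty suffix is a harmless refinement.
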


\begin{proof}
If $|w|\le N_L$ for all words $w \in \Sigma^+$ such that $w\models\phi$, the claim is trivial.
Assume then that $w\models\phi$ and $|w|>N_L$. Then there are two initial segments $u$ and $u'$ of $w$ such that $|u|<|u'|$ and $u\equiv_L u'$. Let $v$ and $v'$ be the corresponding end segments, i.e., $w=uv=u'v'$. Then by Theorem \ref{preservation}, $uv'\equiv_L u'v'=w$, and similarly $u'v\equiv_L uv=w$, whence $uv'\models\phi$ and $u'v\models\phi$. 

Since $|uv'|<|w|$, we see that $w$ is not the shortest word satisfying $\phi$. The argument applies to any word $w$ with $|w|>N_L$, whence we conclude that $\mlm(\phi)\le N_L$. On the other hand $|u'v|>|w|$, whence $w$ is neither the longest word satisfying $\phi$. Applying this argument repeatedly, we see that $\phi$ is satisfied in arbitrarily long words, whence $\Mlm(\phi)=0\le N_L$. 
\end{proof}

From Propositions \ref{DNvsLS} and \ref{mlm-bound} we immediately obtain the following upper bound for the definability numbers of quantifier rank fragments of $\MSO$:

\begin{corollary}\label{DNLleNL}
Let $k\in\Nset$ and $L\in\{\FO_k,\MSO_k\}$.
Then  $\LS(L),\Hanf(L)\le N_L$, and consequently $\defN(L)\le N_L$.
\end{corollary}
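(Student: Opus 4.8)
The plan is to obtain the statement directly by combining the two propositions that immediately precede it, after noting that they apply to the fragments in question. First I would record that for every $k$, both $\FO_k$ and $\MSO_k$ are finite fragments of $\MSO$: every $\FO$-formula is in particular an $\MSO$-formula, and by the remark following the definition of the quantifier rank fragments, $\MSO_k$ contains only finitely many formulas up to equivalence. Hence Definition \ref{LS-Hanf}, Proposition \ref{DNvsLS} and Proposition \ref{mlm-bound} are all available for $L$.

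For the first two bounds, I would just unfold the definitions. By Definition \ref{LS-Hanf}(b) we have $\LS(L)=\max\{\mlm(\phi)\mid\phi\in L\}$, and Proposition \ref{mlm-bound} gives $\mlm(\phi)\le N_L$ for each $\phi\in L$, so the maximum over $L$ is also at most $N_L$. The same proposition yields $\Mlm(\phi)\le N_L$ for every $\phi\in L$, so together with Definition \ref{LS-Hanf}(c) this gives $\Hanf(L)\le N_L$ in exactly the same way. For the definability number, assuming $\Def(L)\neq\emptyset$ so that $\defN(L)$ is defined, Proposition \ref{DNvsLS} gives $\defN(L)\le\LS(L)$ (and also $\defN(L)\le\Hanf(L)$), and combining this with the previous bounds yields $\defN(L)\le N_L$.

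There is essentially no obstacle here: all of the real content has already been extracted in Proposition \ref{mlm-bound}, whose proof in turn rests on the catenation-preservation Theorem \ref{preservation} together with a pumping argument on initial segments. The only things to be careful about are the routine verification that $L$ genuinely satisfies the finiteness hypothesis of the earlier statements, and the mild caveat that $\defN(L)$ is only asserted when $\Def(L)$ is nonempty.
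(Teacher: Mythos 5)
Your proposal is correct and matches the paper's intended argument exactly: the corollary is obtained by combining Proposition \ref{mlm-bound} (which bounds $\mlm(\phi)$ and $\Mlm(\phi)$ by $N_L$ for every $\phi\in L$) with the definitions of $\LS$ and $\Hanf$ as maxima, and then Proposition \ref{DNvsLS} for the definability number. The extra care about finiteness of the fragments and the nonemptiness of $\Def(L)$ is appropriate but routine.
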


This $N_L$ upper bound for the definability, Löwenheim-Skolem and Hanf numbers shows that the quantifier rank fragments $L$ of $\FO$ and $\MSO$ behave quite tamely on words: Clearly every type $\tp_L(w)$ is definable by a sentence of $L$, whence the number of non-equivalent sentences in $L$ is $2^{N_L}$. Thus, any collection of representatives of non-equivalent sentences of $L$ necessarily contains sentences of size close to $N_L$. But in spite of this, it is not possible to define words that are longer than $N_L$ by sentences of $L$.

This shows that quantifier rank is not a good starting point if we want to prove interesting succinctness results for definability. Hence we turn our attention to the size $n$ fragments $\FO[n]$ and $\MSO[n]$.
Note first that for any $n \in \Nset$, $\FO[n]$ is trivially contained in $\FO_n$, and similarly, $\MSO[n]$ is contained in $\MSO_n$. A simple argument shows that this can be improved by a factor of 2:

\begin{lemma}\label{2n-lemma}
For any $n\in\Nset$, $\FO[2n]\le\FO_n$ and $\MSO[2n]\le\MSO_n$.
\end{lemma}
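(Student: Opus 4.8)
The plan is to prove the stronger statement that every formula $\phi$ with $\size(\phi) \le 2\qrfo(\phi)$... no wait, that's backwards. Let me think about what's actually being claimed: $\FO[2n] \subseteq \FO_n$ (up to equivalence), i.e. every formula of size at most $2n$ is equivalent to one of quantifier rank at most $n$. The natural route is to show directly that $\qrfo(\phi) \le \size(\phi)/2$ for every $\phi$, or rather something slightly more refined, since a single atomic formula has size $1$ and quantifier rank $0$, which already gives a ratio better than $1/2$, but conjunctions cost size without costing quantifier rank.

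First I would set up an induction on the structure of $\phi$, aiming to prove the sharp inequality $2\,\qrfo(\phi) \le \size(\phi)$... except this fails, e.g. $\exists x\, \alpha$ with $\alpha$ atomic has size $2$ and quantifier rank $1$: equality, fine; but $\exists x\, \exists y\, (\alpha \land \beta)$ has size $5$ and quantifier rank $2$, so $2\cdot 2 = 4 \le 5$, still fine. The worrying case is a long chain of quantifiers with no Boolean connectives: $\exists x_1 \cdots \exists x_n\, \alpha$ has size $n+1$ and quantifier rank $n$, so $2n \le n+1$ is false for $n \ge 2$. So the bare inequality $2\,\qrfo \le \size$ is false, and the lemma must really be using equivalence, not syntactic identity — a chain of quantifiers over a single atom collapses. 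Concretely, $\exists x_1 \cdots \exists x_n\, \alpha(x_i)$ is equivalent to $\exists x\, \alpha(x)$ (or to a truth constant), which has quantifier rank $1$. So the key observation is: consecutive quantifiers that do not "interact" through a connective can be compressed.

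Hence the cleaner approach: prove by induction on $\phi$ that $\phi$ is equivalent to some $\phi'$ with $\qrfo(\phi') \le \lceil \size(\phi)/2 \rceil$, handling the cases as follows. For atomic $\phi$: take $\phi' = \phi$, rank $0 \le \lceil 1/2 \rceil = 1$. For $\phi = \psi \land \theta$ (or $\lor$): by induction get $\psi',\theta'$ with ranks $\le \lceil\size(\psi)/2\rceil, \lceil\size(\theta)/2\rceil$; then $\psi'\land\theta'$ works since $\qrfo = \max \le \lceil(\size(\psi)+\size(\theta)+1)/2\rceil = \lceil\size(\phi)/2\rceil$, using $\max\{\lceil a/2\rceil,\lceil b/2\rceil\}\le\lceil(a+b+1)/2\rceil$. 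For $\phi = \neg\psi$: rank unchanged, size increases, trivial. For a quantifier $\phi = Qx\,\psi$ or $QU\,\psi$: this is where one must be careful; naively $\qrfo(\phi') \le 1 + \lceil\size(\psi)/2\rceil$ while $\lceil\size(\phi)/2\rceil = \lceil(\size(\psi)+1)/2\rceil$, and $1 + \lceil m/2\rceil$ can exceed $\lceil(m+1)/2\rceil$ by as much as... for $m$ even these differ by $1$. So I would instead track a two-part invariant, distinguishing whether the top of $\phi$ is a quantifier or not, or equivalently prove: if $\size(\phi) = m$ then $\qrfo(\phi') \le \lfloor m/2 \rfloor$ when $\phi$'s outermost connective is Boolean/atomic contributes the slack. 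Actually the slickest version is to prove the pair of bounds $\qrfo(\phi')\le\lceil\size(\phi)/2\rceil$ always, and moreover $\qrfo(\phi')\le\lfloor\size(\phi)/2\rfloor$ whenever $\phi$ is not of the form $Qx\,\psi$/$QU\,\psi$ — the extra "$+1$" size from a Boolean connective or from being atomic provides the rounding room, so that one quantifier can always be absorbed.

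The main obstacle I anticipate is exactly this bookkeeping at the quantifier case: making sure the induction hypothesis is strong enough that prefixing a quantifier does not break the bound. The resolution — and I expect the paper does something equivalent — is either (i) the refined two-case invariant above, or (ii) observing that without loss of generality one may push all negations to the leaves and then note that any maximal block of $j$ consecutive quantifiers in the formula tree sits above a subformula that is a Boolean combination or an atom, so the "$+1$"s from those connectives (or the atom's size $1$) pay for pairing each quantifier with something, yielding rank $\le \size/2$ rounded appropriately. Once the quantifier case is handled, the Boolean and atomic cases are the routine ceiling/floor arithmetic sketched above, and $\FO[2n]\subseteq\FO_n$, $\MSO[2n]\subseteq\MSO_n$ (up to logical equivalence, which is all that matters for the fragments) follow by taking $m = 2n$. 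The MSO case is identical since the size and quantifier-rank recursions treat first- and second-order quantifiers the same way.
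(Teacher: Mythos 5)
You correctly identify the crux: syntactic containment $\FO[2n]\subseteq\FO_n$ fails because of quantifier chains, so the lemma must be read up to logical equivalence and some semantic collapsing of quantifiers is unavoidable. That is indeed the insight behind the paper's (sketched) proof. However, the mechanism you actually specify --- the two-case invariant $\qrfo(\phi')\le\lceil\size(\phi)/2\rceil$ in general and $\qrfo(\phi')\le\lfloor\size(\phi)/2\rfloor$ when the outermost symbol is not a quantifier --- does not close the induction. In the case $\phi=Qx\,\psi$ where $\psi$ is itself of the form $Qy\,\chi$, the inductive hypothesis only supplies the ceiling bound for $\psi$, and $1+\lceil m/2\rceil>\lceil(m+1)/2\rceil$ whenever $m=\size(\psi)$ is odd. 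Concretely, for $\exists x\exists y\exists z\,\alpha$ with $\alpha$ atomic (size $4$) the invariant only yields rank $3$, while the lemma requires rank $2$; the reduction to rank $2$ comes from the fact that an atomic formula mentions at most two variables, so one of the three quantifiers is vacuous and can be deleted over nonempty word models --- a semantic step the floor/ceiling bookkeeping cannot see. So your resolution (i) is not merely delicate, it is insufficient.

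Your resolution (ii) --- eliminate non-interacting quantifiers and charge each surviving quantifier to atoms and connectives --- is the right direction and is essentially what the paper does, but it is left as a sketch, and per-block accounting is not enough on its own since the body of a maximal quantifier block may itself contain quantifiers. The paper's own argument (also given only as an idea) is the global contrapositive count: if $\phi$ is not equivalent to any sentence of smaller quantifier rank, then every quantified variable along a rank-realizing path must occur in an order atom $x<y$ that makes it interact with the other quantified variables, and the quantifiers, these atoms, and the connectives joining them already force $\size(\phi)\ge 2\,\qrfo(\phi)$. To turn your proposal into a proof you would need to (a) justify that every non-interacting quantifier can be eliminated or pulled out without increasing the rank of the remaining nesting (including degenerate atoms such as $x<x$ and $x=x$, and variables that are rebound deeper in the formula), and (b) carry out a global charging argument matching each of the $n$ essential quantifiers with a distinct non-quantifier symbol. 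The MSO case then does follow by the same reasoning, as you note.
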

\begin{proof}
(Idea) Any sentence $\phi$ with quantifier rank $n$ is equivalent to one with smaller quantifier rank unless it contains atomic formulas of the form $x<y$ mentioning each quantified variable, and more than one of them at least twice. Counting the quantifiers, the atomic formulas, and the connectives needed, we see that $\size(\phi)\ge 2n$.
\end{proof}

Note that we have not tried to be optimal in the formulation of Lemma \ref{2n-lemma}. We believe that with a more careful analysis, $2n$ could be replaced with $3n$, and possibly with an even larger number.

\begin{corollary}\label{sizeNL}
For any $n\in\Nset$,   $\defN(\FO[2n]),\LS(\FO[2n]),\Hanf(\FO[2n])\le N_{\FO_n}$ and $\defN(\MSO[2n]),\LS(\MSO[2n]),\Hanf(\MSO[2n])\le N_{\MSO_n}$. 
\end{corollary}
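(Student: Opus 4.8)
The plan is to chain together the results already established in the excerpt, since Corollary~\ref{sizeNL} is essentially a bookkeeping consequence of Lemma~\ref{2n-lemma} and Corollary~\ref{DNLleNL}. First I would fix $n \in \Nset$ and consider the fragment $\FO[2n]$. By Lemma~\ref{2n-lemma} we have the containment $\FO[2n] \le \FO_n$, meaning every sentence of size at most $2n$ is logically equivalent to one of quantifier rank at most $n$. The key observation is that all four of the relevant parameters --- $\defN$, $\LS$ and $\Hanf$ --- depend only on the class of models definable by sentences of the fragment, and hence are monotone under passing to a larger (up to equivalence) fragment: if $L \le L'$ then every $\phi \in L$ has an equivalent $\phi' \in L'$ with the same models, so $\mlm(\phi)=\mlm(\phi')$, $\Mlm(\phi)=\Mlm(\phi')$, and $\Def(L) \subseteq \Def(L')$. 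Consequently $\defN(\FO[2n]) \le \defN(\FO_n)$, $\LS(\FO[2n]) \le \LS(\FO_n)$ and $\Hanf(\FO[2n]) \le \Hanf(\FO_n)$.

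Next I would invoke Corollary~\ref{DNLleNL} with $L = \FO_n$, which gives $\LS(\FO_n), \Hanf(\FO_n) \le N_{\FO_n}$ and $\defN(\FO_n) \le N_{\FO_n}$. Combining this with the monotonicity inequalities from the previous step yields
\[
\defN(\FO[2n]),\ \LS(\FO[2n]),\ \Hanf(\FO[2n]) \le N_{\FO_n},
\]
which is the first half of the statement. The second half, concerning $\MSO[2n]$ and $N_{\MSO_n}$, is obtained by repeating the identical argument with $\MSO$ in place of $\FO$: Lemma~\ref{2n-lemma} gives $\MSO[2n] \le \MSO_n$, and Corollary~\ref{DNLleNL} applies equally to $L = \MSO_n$.

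I do not anticipate a genuine obstacle here; the only point requiring a word of care is making explicit the monotonicity of $\defN$, $\LS$, $\Hanf$ under the relation $\le$ on fragments. This is where one must note that $\le$ is defined via logical equivalence preserving the set of models, so the minimal/maximal model length of a sentence and its membership in $\Def$ are unaffected when it is replaced by an equivalent sentence in the larger fragment. Once that is spelled out, the corollary follows by a two-line chain of inequalities for each logic. (One should also silently note the hypothesis $\Def(L)\neq\emptyset$ implicit in Definition~\ref{LS-Hanf} is inherited appropriately, or simply treat the degenerate empty case as giving the bound trivially.)
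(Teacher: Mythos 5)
Your proposal is correct and matches the paper's (implicit) argument exactly: the corollary is stated without proof precisely because it follows by chaining Lemma~\ref{2n-lemma} with Corollary~\ref{DNLleNL}, using the monotonicity of $\defN$, $\LS$ and $\Hanf$ under the containment-up-to-equivalence relation on fragments. Your explicit remark that these three numbers depend only on the classes of models of the sentences, and are therefore unaffected by replacing a sentence with an equivalent one of lower quantifier rank, is the only step the paper leaves tacit, and you have filled it in correctly.
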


\subsection{Number of types}\label{number-of-types}

As we have seen in the previous section, the numbers of $\FO_k$-types and $\MSO_k$-types give upper bounds for the corresponding definbability, Löwenheim-Skolem and Hanf-numbers. It is well known that on finite relational structures, for $\FO_k$ this number is bound above by an exponential tower of height $k+1$ with a polynomial, that depends on the vocabulary, on top (see, e.g., \cite{PikhurkoV09} for the case of graphs). It is straightforward to generalize this type of upper bound to $\MSO_k$. On the class of $\Sigma$-words, we can prove the following explicit upper bounds. For the proof of this result, see the Appendix.

\begin{theorem}
For any $k\in\Nset$, 
$N_{\FO_k}\le \tower(k+\log^*(k^2+k)+1)$\\ and
$N_{\MSO_k}\le \tower(k+\log^*((k+1)^2)+1)$.
\end{theorem}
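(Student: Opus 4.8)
The plan is to bound the number of $\FO_k$-types and $\MSO_k$-types on $\Sigma$-words by induction on $k$, tracking the count at each rank explicitly and then collapsing the resulting tower-of-exponentials-with-polynomial-on-top into the stated closed form using the iterated logarithm. First I would set up notation: let $f_{\FO}(k) = N_{\FO_k}$ on $\Sigma$-words (equivalently, the number of $\equiv_{\FO_k}$-classes, which is finite by the remark after the fragment definitions), and similarly $f_{\MSO}(k) = N_{\MSO_k}$. For the base case $k=0$, a quantifier-free sentence in the vocabulary $\{<,P_l,P_r\}$ over $\Sigma$-words has no free variables, so the only atomic sentences available are trivial; one checks directly that $N_{\FO_0} = N_{\MSO_0}$ is a small constant (bounded by a fixed number, and certainly by $\tower(1) = 2$ if one is careful, or by $2$ in general since the $0$-type is essentially empty on $\Sigma^+$). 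The point is that the base case is $O(1)$ and in particular at most $\tower(0+\log^*(\cdot)+1)$ for the relevant polynomial argument.

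For the inductive step I would use the standard Ehrenfeucht–Fraïssé / type-counting argument. An $\FO_k$-type of a word is determined by which $\FO_k$-sentences it satisfies, and a rank-$k$ sentence is a Boolean combination of sentences of the form $\exists x\,\psi(x)$ where $\psi$ has rank $k-1$ with one free variable. So I would count rank-$(k-1)$ formulas with a bounded number of free variables (here one first-order variable, and for MSO additionally the second-order parameters), express that count in terms of $f(k-1)$ inflated by a polynomial factor coming from the number of ways the free variable(s) can interact with the atomic predicates $<, P_l, P_r$, and then take $2$ to that power to get $f(k)$. Concretely this yields a recurrence of the shape $f(k) \le 2^{p(f(k-1))}$ where $p$ is a fixed low-degree polynomial — for $\FO$ the polynomial overhead per level is roughly quadratic in $k$ (matching the $k^2+k$ inside the $\log^*$), and for $\MSO$ it is roughly $(k+1)^2$ because each second-order quantifier multiplies the relevant parameter count. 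I would make the bookkeeping uniform by proving, say, $f_{\FO}(k) \le \tower(k + c_k)$ where $c_k$ absorbs the accumulated polynomial slack, and then show $c_k \le \log^*(k^2+k)+1$ by observing that iterating "apply a polynomial, then exponentiate" only shifts the tower height by the iterated logarithm of the total polynomial blow-up — this is precisely why $\log^*$ appears rather than an additive polynomial in the height.

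The main obstacle, and the step deserving the most care, is the second one: controlling exactly how much polynomial overhead accumulates across the $k$ levels and showing it is absorbed into a single $\log^*$ term of a polynomial in $k$, rather than growing the tower height level-by-level. The naive recurrence $f(k)\le 2^{p(f(k-1))}$ iterated $k$ times looks like it could add a polynomial to the height at every step; the key observation is that once $f(j)$ exceeds the polynomial's degree threshold (which happens after a constant number of steps), the factor $p(f(j))$ is dominated by $f(j)^{O(1)}$, and $f(j)^{O(1)} \le 2^{O(\log f(j))}$, so $2^{p(f(j))}$ only inflates the next tower entry by a bounded amount — formally, $\tower(j+a) \mapsto \tower(j+1+a)$ with $a$ changing by at most the $\log^*$ of the cumulative constant. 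I would make this precise with a short lemma: if $a_{k} \le 2^{q(a_{k-1})}$ for a polynomial $q$ and $a_0$ a constant, then $a_k \le \tower(k + \log^*(Q) + O(1))$ where $Q$ bounds the relevant polynomial coefficients as a function of $k$. Plugging in $Q = k^2+k$ for $\FO$ and $Q=(k+1)^2$ for $\MSO$ gives the two stated inequalities. The EF-game argument for the recurrence itself is routine (it is the same argument referenced for Theorem~\ref{preservation}, just applied to count rather than to preserve types), so the real content is the tower arithmetic.
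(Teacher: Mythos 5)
Your overall strategy (count types by induction on quantifier rank, then do tower arithmetic with $\log^*$) is the right one, but the inductive skeleton you propose is not the one that works, and the place where you locate the polynomial $k^2+k$ is wrong. The paper's induction is a \emph{two-parameter} one, over quantifier rank and number of free parameters simultaneously: writing $F_k(s)$ for the number of $\FO_k$-types of words with $s$ distinguished points (and $M_k(m)$ for the $\MSO$ analogue with $m$ set-or-point parameters), the recurrence is the pure powerset bound $F_{k+1}(s)\le 2^{F_k(s+1)}$ — the rank-$(k+1)$ type of an $s$-tuple is determined by the \emph{set} of rank-$k$ types of its one-point extensions, with no polynomial overhead at all at intermediate levels. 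The number of free parameters grows by one each time the rank drops by one, so computing $N_{\FO_k}=F_k(0)$ bottoms out at $F_0(k)$, the number of \emph{atomic} types of $k$-tuples, which is where the single polynomial bound $F_0(k)\le 2^{k^2+k}$ (resp.\ $M_0(k)\le 2^{(k+1)^2}$) enters; unwinding gives $\tower(k+\log^*(k^2+k)+1)$ directly. Your proposal instead anchors the induction at $N_{\FO_0}=O(1)$ (types of quantifier-free \emph{sentences}) and posits a single-parameter recurrence $f(k)\le 2^{p(f(k-1))}$ relating sentence-type counts at consecutive ranks. That recurrence is not justified: you would need that the number of rank-$(k-1)$ types with one free variable is polynomial in the number of rank-$(k-1)$ types with none, and no such relationship is available — the natural bound relates types at rank $j$ with $s+1$ parameters to types at rank $j+1$ with $s$ parameters, not to types at the same rank with fewer parameters. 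Consequently your bookkeeping also cannot produce the $\log^*(k^2+k)$ term: with a constant base case and a fixed per-level polynomial, your absorption lemma would yield a constant offset, not one depending on $k$.

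A secondary, fixable issue is the absorption lemma itself. The claim that $a_k\le 2^{p(a_{k-1})}$ ``only inflates the next tower entry by a bounded amount'' is true but does not follow from the invariant $a_j\le\tower(j+b)$ alone: from that hypothesis one only gets $p(a_j)\le\tower(j+b)^c\le\tower(j+1+b)$ and hence $a_{j+1}\le\tower(j+2+b)$, so the height creeps up by two per level. To make it close you must strengthen the invariant to something like $c\cdot p(a_j)\le\tower(j+b)$ (carrying the polynomial inside the induction hypothesis), after which the step does go through and $b$ is fixed by the base case as an iterated logarithm. The paper sidesteps this delicacy entirely because its recurrence $F_{k+1}(s)\le 2^{F_k(s+1)}$ has no polynomial in it, so each level adds exactly one to the tower height and the only $\log^*$ contribution comes from the single base-case polynomial.
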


By Corollary \ref{DNLleNL}, we obtain the same upper bounds for the definability, Löwenheim-Skolem and Hanf numbers of the quantifier rank fragments.

\begin{corollary}
For any $k\in\Nset$,\\ $\defN(\FO_k),\LS(\FO_k),\Hanf(\FO_k)\le\tower(k+\log^*(k^2+k)+1)$ and \\ $\defN(\MSO_k),\LS(\MSO_k),\Hanf(\MSO_k)\le\tower(k+\log^*((k+1)^2)+1)$.
\end{corollary}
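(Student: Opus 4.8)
The statement needing a proof here is the type‑counting bound
$N_{\FO_k}\le\tower(k+\log^*(k^2+k)+1)$ and $N_{\MSO_k}\le\tower(k+\log^*((k+1)^2)+1)$
(the subsequent corollary is then immediate from Corollary~\ref{DNLleNL}). The plan is the familiar recursive count of quantifier‑rank types via Hintikka formulas / \EF\ games, carried out carefully enough in the base case and in the tower arithmetic to land exactly on these polynomials inside $\log^*$.

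\emph{First‑order case.} Let $t_j(k)$ be the number of $\FO_k$‑types realised by $j$‑tuples of points in $\Sigma$‑words, so $N_{\FO_k}=t_0(k)$. The Hintikka‑formula argument (equivalently, the $(j+k)$‑pebble \EF\ game) shows that the $\FO_{k+1}$‑type of a $j$‑tuple $\bar a$ is determined by the set $\{\,\tp_{\FO_k}(\bar a b)\mid b\text{ a point of the word}\,\}$, hence $t_j(k+1)\le 2^{t_{j+1}(k)}$. Iterating $k$ times gives $N_{\FO_k}=t_0(k)\le\exp^{(k)}(t_k(0))$, where $\exp^{(m)}$ denotes $m$‑fold application of $x\mapsto 2^x$. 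For the base case, an atomic type of a $k$‑tuple over $\{<,P_l,P_r\}$ is fixed by the ordered partition of the index set (which coordinates coincide, and in what order) together with the letter $l$ or $r$ of each block, so $t_k(0)\le k!\cdot 2^k\le 2^{k^2}\le 2^{k^2+k}$. Thus $N_{\FO_k}\le\exp^{(k)}(2^{k^2+k})=\exp^{(k+1)}(k^2+k)$, and since $k^2+k\le\tower(\log^*(k^2+k))$ and $\exp^{(m)}(\tower(n))=\tower(m+n)$, this is at most $\tower(k+\log^*(k^2+k)+1)$; the few small $k$ are checked directly.

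\emph{MSO case.} Let $T(p,q,k)$ be the number of $\MSO_k$‑types realised by a word with a distinguished $p$‑tuple of points and $q$‑tuple of sets, so $N_{\MSO_k}=T(0,0,k)$. Now the Hintikka formula has two one‑step extensions (guess a point, guess a set), so the $\MSO_{k+1}$‑type of $(\bar a,\bar A)$ is determined by the two sets $\{\tp_{\MSO_k}(\bar a b,\bar A)\}$ and $\{\tp_{\MSO_k}(\bar a,\bar A B)\}$, giving $T(p,q,k+1)\le 2^{T(p+1,q,k)}\cdot 2^{T(p,q+1,k)}=2^{\,T(p+1,q,k)+T(p,q+1,k)}$. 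The crucial point is that this is a single power of $2$, not a nested one: putting $B(m):=\max\{T(p,q,m)\mid p+q=k-m\}$ we get $B(m)\le 2^{2B(m-1)}=4^{B(m-1)}$, so the tower height still grows by only $1$ per level. A small auxiliary inequality ($2\exp^{(m)}(z)\le\exp^{(m)}(z+1)$, whence iterating $4^x$ costs only an additive constant over iterating $2^x$) yields $B(k)\le\exp^{(k)}(2B(0)+k)$. In the base case, with $p$ points and $q$ sets and $p+q=k$, an atomic type is fixed by the ordered partition of the points plus, for each of the $\le p$ blocks, its letter and its membership pattern among $U_1,\dots,U_q$, i.e. at most $p!\cdot 2^{p(q+1)}\le 2^{p^2+p(q+1)}\le 2^{p(k+1)}\le 2^{k^2+k}$. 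Hence $2B(0)+k\le 2^{(k+1)^2}$ for $k\ge 1$, so $B(k)\le\exp^{(k+1)}((k+1)^2)\le\tower(k+\log^*((k+1)^2)+1)$ exactly as before.

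\emph{Expected main obstacle.} The genuine work is confined to two bookkeeping points: (i) verifying that the $\MSO$ branching contributes only one exponential per quantifier — which hinges on the product‑of‑equal‑base collapse $2^a\cdot 2^b=2^{a+b}$ and on the $4^x$‑versus‑$2^x$ estimate above — and (ii) counting atomic types tightly enough. A crude "number of all atomic formulas" bound also gives $2^{\Theta(k^2)}$, but with a constant that $\log^*$ is just barely coarse enough to detect for small $k$, so one should use the partition‑and‑colouring description in order to land precisely on $k^2+k$ and $(k+1)^2$. The remaining ingredients — the Hintikka recursion, $\exp^{(m)}(\tower(n))=\tower(m+n)$, and $x\le\tower(\log^*x)$ — are routine.
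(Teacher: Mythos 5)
Your proposal correctly reduces the stated corollary to the type-counting theorem via Corollary~\ref{DNLleNL} and then proves that theorem by essentially the same argument as the paper's appendix: a recursive count of quantifier-rank types of interpretations with free variables (the paper sums the type counts over all splits $r+s=m$ and uses a union-of-disjoint-type-classes trick where you take a maximum and absorb the two-way MSO branching into $4^x$, but this is only an organizational difference), followed by an atomic-type count and the same tower arithmetic. Two intermediate inequalities are misstated but harmless: the number of ordered partitions of a $k$-element set exceeds $k!$ (bound it by $k^k$ instead; the conclusion $2^{k^2+k}$ still holds), and $2\exp^{(m)}(z)\le\exp^{(m)}(z+1)$ fails at $m=0$, so the step from $B(0)$ to $B(1)$ must be handled directly via $2^{2B(0)}\le 2^{2B(0)+1}$, after which the induction closes for $m\ge 1$.
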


As we discussed after Corollary \ref{DNLleNL}, from the point of view of succinctness it is more interesting to consider the definability numbers of the size fragments of $\FO$ and $\MSO$ than those of the quantifier rank fragments. Using Corollary \ref{sizeNL}, we obtain the following upper bounds for $\FO[n]$ and $\MSO[n]$.

\begin{corollary}
For any $n\in\Nset$,\\
$\defN(\FO[n]), \LS(\FO[n]), \Hanf(\FO[n]) \leq \tower(n/2+\log^*((n/2)^2+n/2)+1)$
and \\ $\defN(\MSO[n]), \LS(\MSO[n]), \Hanf(\MSO[n]) \leq \tower(n/2+\log^*((n/2+1)^2)+1)$.

\end{corollary}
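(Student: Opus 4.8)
The plan is to derive this immediately from Corollary~\ref{sizeNL} together with the type-counting theorem of Section~\ref{number-of-types}, the point being simply to substitute $k=n/2$ into the latter. Concretely, for $\FO$ and even $n$ write $n=2k$: Corollary~\ref{sizeNL} gives $\defN(\FO[n]),\LS(\FO[n]),\Hanf(\FO[n])\le N_{\FO_k}$, while the type-counting theorem gives $N_{\FO_k}\le\tower(k+\log^*(k^2+k)+1)$; chaining these and replacing $k$ by $n/2$ is exactly the first displayed inequality. The $\MSO$ case for even $n$ is word-for-word the same, using $N_{\MSO_k}\le\tower(k+\log^*((k+1)^2)+1)$.

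For odd $n$ I would observe that the analysis in the proof of Lemma~\ref{2n-lemma} in fact yields the slightly sharper statement $\FO[s]\le\FO_{\lfloor s/2\rfloor}$ and $\MSO[s]\le\MSO_{\lfloor s/2\rfloor}$ for every $s\in\Nset$: a sentence of quantifier rank $m$ that is not equivalent to any sentence of smaller quantifier rank must have size at least $2m$, and iterating this shows that every sentence of size $s$ is equivalent to one of quantifier rank at most $\lfloor s/2\rfloor$. Hence, exactly as in the even case, for $k=\lfloor n/2\rfloor$ we get $\defN(\FO[n]),\LS(\FO[n]),\Hanf(\FO[n])\le N_{\FO_k}\le\tower(k+\log^*(k^2+k)+1)$, and since $\tower$ and $\log^*$ are non-decreasing and $k=\lfloor n/2\rfloor\le n/2$, this last value is at most $\tower(n/2+\log^*((n/2)^2+n/2)+1)$, as required; the $\MSO$ computation is identical. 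If one insists on using Lemma~\ref{2n-lemma} only as stated, one can instead invoke the trivial inclusion $\FO[n]\subseteq\FO[n+1]$ together with the obvious monotonicity of $\defN$, $\LS$ and $\Hanf$ in the fragment, at the cost of a little slack that is absorbed by the coarse rounding in the definition of $\tower$.

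I do not expect a genuine obstacle here: the whole content is already contained in Corollary~\ref{sizeNL} and the type-counting theorem, so this corollary is essentially a substitution. The only point that requires a moment's care is the bookkeeping around the parity of $n$, i.e.\ matching the half-integer argument $n/2$ appearing in the statement with the integer quantifier-rank parameter $k$ to which Corollary~\ref{sizeNL} and the type-counting theorem literally apply; this is handled either by the sharpening of Lemma~\ref{2n-lemma} above or by the $\FO[n]\subseteq\FO[n+1]$ trick.
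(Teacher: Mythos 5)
Your derivation is correct and is exactly the paper's (implicit) argument: the corollary is obtained by chaining Corollary~\ref{sizeNL} with the type-counting theorem of Section~\ref{number-of-types}, substituting $k=n/2$. Your extra care with odd $n$ (via the sharpened $\FO[s]\le\FO_{\lfloor s/2\rfloor}$ or the inclusion $\FO[n]\subseteq\FO[n+1]$) is sound and merely makes explicit a detail the paper leaves to the ceiling in the definition of $\tower$.
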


In the next two sections we will prove lower bounds for the definability numbers of $\FO[n]$ and $\MSO[n]$ by providing explicit polynomial size sentences that define words that are of exponential tower length. 

\section{Lower bounds for FO}\label{fo}

In order to obtain a lower bound for $\defN(\FO[n])$ we need a relatively small $\FO$-formula that defines a long word. The long word we define has to do with the cumulative hierarchy of finite sets. 

Consider representing finite sets using only braces $\{$ and $\}$. This gives each set multiple encodings as words. For the finite levels $V_i$ of the cumulative hierarchy of sets, such words clearly have length at least $\tower(i)$. We will define one such word via an $\FO$-formula of polynomial size with respect to $i$.

For readability, we define $L(x) := P_l(x)$ and $R(x) := P_r(x)$ that say $x$ is a left or right brace, respectively. We also define $S(x,y) := x<y \land \neg \exists z (x<z<y)$ that says $y$ is the successor of $x$.

As each set in the encoding can be identified by its outmost braces, the formula mostly operates on pairs of variables. For readability we adopt the convention $\overline{x} := (x_1, x_2)$, and similarly for different letters, to denote these pairs. To ensure that our formula defines a single encoding of $V_i$, we also define a linear order on encoded sets and require that the elements are in that order.

We define our formula recursively in terms of many subformulas. We briefly list the meanings and approximate sizes of each subformula involved:
\begin{itemize}
    \item $\setc (\overline{x}, \theta(s,t))$: the common core formula used in $\set_i$ and $\oset_i$. This is defined only to save space. The variables $s$ and $t$ are used only to make the formula smaller.
    \begin{align*}
        \setc (\overline{x}, \theta(s,t)) &:= x_1<x_2 \land L(x_1) \land R(x_2) \\
        &\land \forall y (x_1 < y < x_2 \rightarrow \exists z (x_1<z<x_2 \land y \neq z \\
        &\land \exists s\exists t((y < z \rightarrow (s = y \land t = z)) \\
        &\land (z < y \rightarrow (s = z \land t = y)) \land \theta(s,t))))
    \end{align*}
    \item $\set_i(\overline{x})$: $\overline{x}$ correctly encodes a set in $V_i$, possibly with repetition. Size linear in $i$. 
    \begin{align*}
        \set_0(\overline{x}) &:= L(x_1) \land R(x_2) \land S(x_1,x_2) \\
        \set_{i+1}(\overline{x}) &:= \setc(\overline{x}, \set_i(s,t))
    \end{align*}
    \item $\overline{x} \in_i \overline{y}$: $\overline{x}$ is an element of $\overline{y}$. Size linear in $i$. Assumes that $\overline{x}$ encodes a set in $V_i$ and $\overline{y}$ encodes a set in $V_{i+1}$. 
    \begin{align*}
        \overline{x} \in_i \overline{y} &:= y_1 < x_1 < x_2 < y_2 \land \neg\exists \overline{z} (\set_i(\overline{z}) \land y_1<z_1<x_1 \land x_2<z_2<y_2)
    \end{align*}
    \item $\overline{x} \sim_i \overline{y}$: $\overline{x}$ and $\overline{y}$ encode the same set. Size $\mathcal{O}(i^2)$. Assumes $\overline{x}$ and $\overline{y}$ encode sets in $V_i$.
    \begin{align*}
        \overline{x} \sim_0 \overline{y} &:= \top \\
        \overline{x} \sim_{i+1} \overline{y} &:= \forall \overline{a}(\set_i(\overline{a}) \rightarrow \exists \overline{b} (\set_i(\overline{b}) \\
        &\land (\overline{a} \in_i \overline{x} \rightarrow \overline{b} \in_i \overline{y}) \land (\overline{a} \in_i \overline{y} \rightarrow \overline{b} \in_i \overline{x}) \land \overline{a} \sim_i \overline{b}))
    \end{align*}
    \item $\overline{x} \prec_i \overline{y}$: the $\prec_{i-1}$-greatest element of the symmetric difference of $\overline{x}$ and $\overline{y}$ is in $\overline{y}$. Size $\mathcal{O}(i^3)$. Defines a linear order for encoded sets in $V_i$.
    \begin{align*}
        \overline{x} \prec_0 \overline{y} &:= \bot \\
        \overline{x} \prec_{i+1} \overline{y} &:= \exists \overline{z} (\set_i(\overline{z}) \land \overline{z} \in_i \overline{y} \land \forall \overline{a} ((\set_i(\overline{a}) \land \overline{a} \in_i \overline{x}) \\
        \rightarrow &(\overline{a} \nsim_i \overline{z} \land (\forall \overline{b} ((\set_i(\overline{b}) \land \overline{b} \in_i \overline{y}) \rightarrow \overline{a} \nsim_i \overline{b}) \rightarrow \overline{a} \prec_i \overline{z}))))
    \end{align*}
    \item $\oset_i(\overline{x})$: $\overline{x}$ correctly encodes a set in $V_i$ with no repetition and with the elements in the linear order given by the formula $\overline{x} \prec_i \overline{y}$. Size $\mathcal{O}(i^4)$. Ensures that only a singular word satisfies our formula.
    \begin{align*}
        \oset_0(\overline{x}) &:= L(x_1) \land R(x_2) \land S(x_1,x_2) \\
        \oset_{i+1}(\overline{x}) &:= \setc(\overline{x}, \oset_i(s,t))
        \land \forall \overline{a} \forall \overline{b} ((\set_i(\overline{a}) \land \set_i(\overline{b}) \\ &\land \overline{a} \in_i \overline{x} \land \overline{b} \in_i \overline{x} \land a_1<b_1) \rightarrow \overline{a} \prec_{i} \overline{b})
    \end{align*}
    \item $\add_i(\overline{x},\overline{y},\overline{z})$: States that $\overline{x} = \overline{y} \cup \{\overline{z}\}$. Size $\mathcal{O}(i^2)$. Assumes $\overline{x}$ and $\overline{y}$ encode sets in $V_i$ and $\overline{z}$ encodes a set in $V_{i-1}$. 
    \begin{align*}
        \add_{i+1}(\overline{x}, \overline{y}, \overline{z}) &:= \forall \overline{a}((\set_{i}(\overline{a}) \land \overline{a} \in_i \overline{y}) \rightarrow \exists \overline{b}(\set_{i}(\overline{b}) \land \overline{b} \in_i \overline{x} \land \overline{a} \sim_{i} \overline{b})) \\
        &\land \exists \overline{c}(\set_{i}(\overline{c}) \land \overline{c} \in_i \overline{x} \land \overline{c} \sim_{i} \overline{z} \\
        &\land \forall \overline{d}((\set_{i}(\overline{d}) \land \overline{d} \in_i \overline{x} \land d_1 \neq c_1) \\
        &\rightarrow \exists \overline{e}(\set_{i}(\overline{e}) \land \overline{e} \in_i \overline{y} \land \overline{e} \sim_{i} \overline{d})))
    \end{align*}
    \item $V_i(\overline{x})$: $\overline{x}$ encodes the set $V_i$. Size $\mathcal{O}(i^5)$. 
        \begin{align*}
            V_0(\overline{x}) &:= \set_0(\overline{x}) \\
            V_{i+1}(\overline{x}) &:= \oset_{i+1}(\overline{x}) \land \exists \overline{a} (V_0(\overline{a}) \land S(x_1,a_1)) \land \exists \overline{b} (V_i(\overline{b}) \land S(b_2,x_2) \\
            &\land \forall \overline{c} \forall \overline{d}((\set_i(\overline{c}) \land \overline{c} \in_i \overline{x} \land \set_{i-1}(\overline{d}) \land \overline{d} \in_{i-1} \overline{b}) \\
            &\rightarrow \exists \overline{e} (\set_i(\overline{e}) \land \overline{e} \in_i \overline{x} \land \add_i(\overline{e},\overline{c},\overline{d}))))
        \end{align*}
    \item $\psi_i$: the entire word is the ordered encoding of the set $V_i$. Size $\mathcal{O}(i^5)$. 
    \begin{align*}
        \psi_i := \exists x \exists y \forall z (x \leq z \land z \leq y \land V_i(x,y))
    \end{align*}
\end{itemize}

The formula $\psi_i$ defines a word $w$ that, as an encoding of the set $V_i$, has length at least $\tower(i)$. The size of $\psi_i$ is $c \cdot i^5$ for some constant $c$ so $w \in \Def(\FO[c\cdot i^5])$. As we want to relate the length of $w$ to the size of $\psi_i$, we set $n = c \cdot i^5$ and obtain the following result:

\begin{theorem}
For some constant $c \in \Nset$ there are infinitely many $n \in \Nset$ satisfying
\[
\defN(\FO[n]) \geq \tower(\sqrt[5]{n/c}).
\]
\end{theorem}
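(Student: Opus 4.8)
The plan is to show that the sentence $\psi_i$ constructed above \emph{defines} the word $w_i$ that is the canonical, repetition-free, $\prec_i$-ordered bracket encoding of the set $V_i$, to bound $|w_i|$ from below by $\tower(i)$ and $\size(\psi_i)$ from above by $c\cdot i^5$, and then to conclude by setting $n=c\cdot i^5$, using monotonicity of $\tower$ and of $\defN$, and letting $i$ range over $\Nset$, which yields infinitely many such $n$.

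The heart of the argument is a simultaneous induction on $i$ verifying the intended meaning of each auxiliary formula under its stated preconditions. For an arbitrary $\Sigma$-word $u$ and positions of $u$ one proves: (i) $u\models\set_i(\overline x)$ iff the factor delimited by $x_1,x_2$ is some (possibly non-canonical) bracket encoding of a set in $V_i$, the $\setc$-clause ensuring that every position in the factor lies in a nested $V_{i-1}$-block; (ii) under the stated preconditions, $\overline x\in_i\overline y$ holds iff the coded sets stand in the membership relation, using that the $\subseteq$-maximal $V_i$-blocks strictly inside $\overline y$ are exactly the codes of the elements of $\overline y$; (iii) $\sim_i$ defines equality of coded sets, i.e.\ extensionality unwound $i$ levels; (iv) $\prec_i$ defines a \emph{strict linear order} on the sets of $V_i$ modulo $\sim_i$, which needs its own induction for irreflexivity, transitivity and totality, the point being that $\overline x\prec_{i+1}\overline y$ compares by the $\prec_i$-largest member of the symmetric difference; (v) $\oset_i(\overline x)$ holds iff the factor of $\overline x$ is the \emph{canonical} code of its set, i.e.\ well bracketed, with no $\sim$-repeated elements and element-blocks occurring left to right in $\prec$-increasing order; (vi) $\add_i(\overline x,\overline y,\overline z)$ expresses $\overline x=\overline y\cup\{\overline z\}$ up to $\sim$; (vii) $V_i(\overline x)$ holds iff $\overline x$ is the canonical code of $V_i$ itself, by induction using $V_{i+1}=\mathcal P(V_i)$, which $V_{i+1}(\overline x)$ enforces by forcing the first element to be $\emptyset$, the last to be a copy of $V_i$, and the coded family to be closed under adding an element of $V_{i-1}$ to any of its members.

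From (vii), $\psi_i$ says the entire word is the canonical code of $V_i$; such a code exists, namely $w_i$, and it is unique because by (iv) and (v) the canonical code of a fixed set is determined (a short extra induction: list the elements in the unique $\prec$-order, code each one canonically, recurse). Hence $w_i\in\Def(\FO[n])$ with $n=\size(\psi_i)$. The bound $\size(\psi_i)\le c\cdot i^5$ is a routine induction over the recursive definitions using the size clauses of the preliminaries: $\setc$, $\set_i$ and $\in_i$ have size $\mathcal O(i)$, $\sim_i$ and $\add_i$ size $\mathcal O(i^2)$, $\prec_i$ size $\mathcal O(i^3)$, $\oset_i$ size $\mathcal O(i^4)$, and $V_i$, hence $\psi_i$, size $\mathcal O(i^5)$. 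As already noted, $w_i$ is a bracket encoding of $V_i$, so $|w_i|\ge\tower(i)$. Combining, for every $i$ there is $n=c\cdot i^5$ with $\defN(\FO[n])\ge|w_i|\ge\tower(i)=\tower(\sqrt[5]{n/c})$.

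The main obstacle is the pair (iv)--(v): showing that $\prec_i$ is a genuine strict linear order on $\sim_i$-classes --- in particular that comparison by ``the $\prec_{i-1}$-largest element of the symmetric difference'' is well defined independently of the chosen block representatives --- and deducing that $\oset_i$ admits, for each set in $V_i$, exactly one satisfying factor. Everything hinges on this uniqueness, since it is what upgrades $\psi_i$ from being merely \emph{satisfied} by $w_i$ to \emph{defining} $w_i$; it is also the place where the several nested quantifications over pairs $\overline a,\overline b,\overline z,\overline c,\overline d$ and the interplay of the $\nsim$-conditions must be read with care. Threading the levels $i,i-1$ through the recursion of $V_i$, and carrying the ``encodes a set in $V_j$'' preconditions through each inductive step, is then bookkeeping.
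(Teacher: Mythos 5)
Your proposal follows exactly the paper's route: it uses the very formulas $\set_i,\in_i,\sim_i,\prec_i,\oset_i,\add_i,V_i,\psi_i$ constructed in Section \ref{fo}, argues that $\psi_i$ defines the unique canonical encoding of $V_i$, and combines the length bound $\tower(i)$ with the size bound $c\cdot i^5$ by setting $n=c\cdot i^5$. The paper itself leaves the semantic verification (in particular that $\prec_i$ linearly orders $\sim_i$-classes so that $\oset_i$ pins down a single word) implicit, and your sketch correctly identifies and outlines precisely those missing details, so it is a faithful, somewhat more explicit version of the same argument.
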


Proposition \ref{DNvsLS} immediately gives the same bound for the Hanf number.

\begin{corollary}
For some constant $c \in \Nset$ there are infinitely many $n \in \Nset$ satisfying
\[
\Hanf(\FO[n]) \geq \tower(\sqrt[5]{n/c}).
\]
\end{corollary}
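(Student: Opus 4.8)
The plan is to derive this Corollary directly from the Theorem stated just above together with Proposition~\ref{DNvsLS}. That proposition gives $\defN(\FO[n])\le\Hanf(\FO[n])$ for every $n$, and the Theorem supplies a constant $c$ and infinitely many $n$ with $\defN(\FO[n])\ge\tower(\sqrt[5]{n/c})$; chaining the two inequalities proves the claim for exactly those $n$. In fact the same witness applies directly: for $n=c\cdot i^5$ the sentence $\psi_i$ defines a single word $w$ with $|w|\ge\tower(i)$, so $\mlm(\psi_i)=\Mlm(\psi_i)=|w|$ and hence $\Hanf(\FO[n])\ge|w|\ge\tower(\sqrt[5]{n/c})$.

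Thus the substance lies in the Theorem, i.e.\ in verifying that $\psi_i$ has the advertised properties, and I would do this by a simultaneous induction on $i$ over all the auxiliary formulas. One shows: that $\set_i(\overline x)$ holds exactly when the factor delimited by $x_1<x_2$ is a well-formed, possibly repetitive, brace encoding of some set in $V_i$ with outermost braces $x_1,x_2$; that, under the stated rank assumptions, $\overline x\in_i\overline y$, $\overline x\sim_i\overline y$ and $\add_i(\overline x,\overline y,\overline z)$ express membership, equality of encoded sets, and ``$\overline x$ encodes $\overline y\cup\{\overline z\}$'' respectively; that $\prec_i$ is a strict linear order on encodings of elements of $V_i$, induced by the $\prec_{i-1}$-maximal element of the symmetric difference; that $\oset_i(\overline x)$ holds exactly when $\overline x$ is a repetition-free encoding of a set in $V_i$ with its elements listed left to right in $\prec_i$-increasing order; and that $V_i(\overline x)$ holds exactly when $\overline x$ encodes the set $V_i$ itself. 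Then $\psi_i$ says that the whole word is such an encoding. Uniqueness of the defined word follows because, by induction, once the encodings of the elements of a set are fixed there is exactly one $\prec_i$-ordered repetition-free encoding of it; hence there is exactly one $w$ with $w\models\psi_i$, so $\psi_i$ defines $w$. The length bound $|w|\ge\tower(i)$ is the observation, already made before the Theorem, that any brace encoding of $V_i$ has at least that many symbols, and the size bound is read off the recursion: $\size(\prec_i)=\mathcal O(i^3)$, $\size(\oset_i)=\mathcal O(i^4)$, and $\size(V_i)=\size(\psi_i)=\mathcal O(i^5)$, each recursive step adding only a bounded number of lower-index subformulas.

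The main obstacle is the correctness half of this induction, concentrated in two places. First, one must check that $\prec_{i+1}$ is total and antisymmetric; this rests on the induction hypotheses that $\prec_i$ is a linear order on $\sim_i$-classes and that $\sim_i$ is an equivalence relation compatible with $\in_i$, and it needs care because $\prec_{i+1}$ refers only to the elements of the two encoded sets, not to the encodings themselves. Second, one must verify that the closure clause in $V_{i+1}(\overline x)$, together with the boundary conditions $S(x_1,a_1)$ with $V_0(\overline a)$ and $S(b_2,x_2)$ with $V_i(\overline b)$, forces $\overline x$ to encode precisely the power set of $V_i$: by induction on subset size the clause produces every subset of $V_i$ as an element of $\overline x$, while $\oset_{i+1}$ and the boundary conditions exclude spurious elements. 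Keeping the rank indices consistent throughout ($V_i$ versus $V_{i-1}$, $\set_i$ versus $\set_{i-1}$, $\in_{i-1}$ versus $\in_i$) and carrying out the size accounting precisely are the routine but error-prone remaining tasks.
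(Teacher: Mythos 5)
Your proposal is correct and matches the paper's proof exactly: the corollary follows immediately by chaining the Theorem's lower bound on $\defN(\FO[n])$ with the inequality $\defN(L)\le\Hanf(L)$ from Proposition~\ref{DNvsLS} (equivalently, noting that the defining sentence $\psi_i$ has a unique model, so $\Mlm(\psi_i)=|w|$). The additional material on verifying the Theorem itself is not needed for this corollary, which the paper treats as already established.
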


By omitting the subformula $\oset_{i+1}$ from the above we get a formula of size $\mathcal{O}(i^3)$ that is no longer satisfied by only one word but still only has large models. With this formula we obtain a lower bound for the Löwenheim-Skolem number.

\begin{corollary}
For some $c \in \Nset$ there are arbitrarily large $n \in \Nset$ satisfying
\[
\LS(\FO[n]) \geq \tower(\sqrt[3]{n/c}).
\]
\end{corollary}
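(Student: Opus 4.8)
The plan is to perform the simplification hinted at just before the statement: keep from the family $\psi_i$ only what is needed to force \emph{every} model to be an encoding of $V_i$ (by braces), discarding the part whose sole role was to single out one particular such word. Concretely, define $V'_j$ and $\psi'_i$ by replacing each occurrence of $\oset_j$ with $\set_j$ in the definitions of the $V_j$'s and of $\psi_i$. Since $\oset_{j+1}(\overline x)=\setc(\overline x,\oset_j(s,t))\land(\text{the }\prec_j\text{-ordering conjunct})$ while $\set_{j+1}(\overline x)=\setc(\overline x,\set_j(s,t))$, this is exactly the deletion of the $\prec$-ordering conjuncts, so the now-unused subformulas $\prec_j$ (of size $\mathcal{O}(j^3)$) and $\oset_j$ (of size $\mathcal{O}(j^4)$) disappear.

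Three things then need to be checked. (i) \emph{Size:} in the recursion for $V'_j$ the ingredients of $V'_{j+1}$ apart from $V'_j$ are $\set_{j+1},\set_j,\set_{j-1},\in_j,\in_{j-1}$ (each of size $\mathcal{O}(j)$) together with one copy of $\add_j$ (of size $\mathcal{O}(j^2)$), so $\size(V'_{j+1})=\size(V'_j)+\mathcal{O}(j^2)$, whence $\size(\psi'_i)=\sum_{j\le i}\mathcal{O}(j^2)+\mathcal{O}(1)\le c\cdot i^3$ for some constant $c$. (ii) \emph{Satisfiability:} $\oset_j$ occurs positively in $V_j$, and since $\setc(\,\cdot\,,\theta)$ is monotone in $\theta$ and $\oset_0=\set_0$, an easy induction gives $\oset_j\to\set_j$, hence $V_j\to V'_j$; therefore $\psi_i\models\psi'_i$, and the word $w$ defined by $\psi_i$ still satisfies $\psi'_i$ (in particular $\Def(\FO[c\cdot i^3])\neq\emptyset$, so $\LS(\FO[c\cdot i^3])$ is defined). (iii) \emph{Every model is long:} none of $\set_j,\in_j,\sim_j,\add_j$, and no conjunct of $V_j$ other than $\oset_j(\overline x)$ itself, refers to $\oset_j$ or $\prec_j$, so $V'_i(\overline x)$ still forces $\overline x$ to be an encoding of $V_i$ — the only properties relinquished are the absence of repetitions and the canonical ordering of elements. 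Any encoding of $V_i$ is at least as long as the canonical one, which, as observed when $\psi_i$ was introduced, has length at least $\tower(i)$; hence every model $v$ of $\psi'_i$ satisfies $|v|\ge\tower(i)$, that is, $\mlm(\psi'_i)\ge\tower(i)$.

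Putting these together, $\psi'_i\in\FO[c\cdot i^3]$ is satisfiable with $\mlm(\psi'_i)\ge\tower(i)$, so $\LS(\FO[c\cdot i^3])\ge\tower(i)$; taking $n=c\cdot i^3$ and letting $i$ range over $\Nset$ yields $\LS(\FO[n])\ge\tower(\sqrt[3]{n/c})$ for arbitrarily large $n$. The step I expect to need the most care is (iii): one has to revisit the definitions of $\set_j,\in_j,\sim_j,\add_j$ and of $V_j$ and confirm that the \emph{identity} of the encoded set is pinned down with no appeal to $\oset_j$ or $\prec_j$, so that passing from $V_j$ to $V'_j$ merely enlarges the class of models (by reorderings and repetitions) without changing which set is encoded; the size bookkeeping and the monotonicity argument in (i)--(ii) are routine.
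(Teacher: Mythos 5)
Your proposal is correct and follows the paper's own (one-sentence) argument: drop the ordering conjunct that distinguishes $\oset_j$ from $\set_j$, thereby eliminating the $\mathcal{O}(j^3)$ subformulas $\prec_j$ and bringing the size down to $\mathcal{O}(i^3)$, while every model must still encode $V_i$ and hence have length at least $\tower(i)$. Your reading of ``omitting $\oset_{i+1}$'' as replacing it by $\set_{i+1}$ (rather than deleting that conjunct outright, which would leave $\overline{x}$ unconstrained) is the intended one, and your checks (i)--(iii) just supply details the paper leaves implicit.
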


\section{Lower bounds for MSO}\label{mso}
In this section, we define a similar formula for MSO as we did above for FO. The formula again defines an encoding of $V_i$ but for MSO our formula is of size $\mathcal{O}(i^2)$ compared to the $\mathcal{O}(i^5)$ of FO. We achieve this by quantifying a partition of so called levels for the braces and thus the encoded sets and using a different method to define only a single encoding.

The level of the entire encoded set will be equal to the maximum depth of braces inside the set. The level of an element of a set will always be one less than the level of the parent set. This means that there will be for example empty sets with different levels in our encoding. 

We again define our formula in terms of many subformulas and briefly list the meaning and size of each subformula:
\begin{itemize}
    \item $\set_i(\overline{x})$: $\overline{x}$ encodes a set of level $i$. Size constant. 
    \begin{align*}
        \set_0(\overline{x}) &:= S(x_1, x_2) \land L(x_1) \land R(x_2) \land D_0(x_1) \land D_0(x_2) \\
        \set_i(\overline{x}) &:= x_1 < x_2 \land L(x_1) \land R(x_2) \land D_i(x_1) \land D_i(x_2) \\
        &\land \forall y (x_1 < y < x_2 \rightarrow \neg D_i(y))
    \end{align*}
    \item $\lev_i$: The relations $D_j$ define the levels of sets as intended and there are no odd braces without pairs. Size $\mathcal{O}(i^2)$.
    \begin{align*}
        \lev_i &:= \forall x (\bigvee\limits_{j = 0}^i D_j(x) \land \bigwedge\limits_{\substack{j,k \in \{0,\dots , i\} \\ j \neq k}} \neg (D_j(x) \land D_k(x))\\
        &\land \forall \overline{x} (\bigwedge\limits_{j=0}^i (\set_j(\overline{x}) \rightarrow \forall y (x_1 < y < x_2 \rightarrow \bigvee\limits_{k = 0}^{j-1} D_k(y)))) \\
        &\land \forall x_1 (\bigwedge\limits_{j=0}^i ((L(x_1) \land D_j(x_1)) \rightarrow \exists x_2 \set_j(x_1, x_2)) \\
        &\land \bigwedge\limits_{j=0}^i (R(x_1) \land D_j(x_1)) \rightarrow \exists x_2 \set_j(x_2, x_1))
    \end{align*}
    \item $\overline{x} \in \overline{y}$: $\overline{x}$ is an element of $\overline{y}$. Size constant. Assumes $\overline{x}$ and $\overline{y}$ both encode sets.
    \begin{align*}
        \overline{x} \in \overline{y} &:= y_1 < x_1 \land x_2 < y_2
    \end{align*}
    \item $\overline{x} \sim_i \overline{y}$: $\overline{x}$ and $\overline{y}$ encode the same set. Size linear in $i$. Assumes $\overline{x}$ and $\overline{y}$ encode sets of level $i$.
    \begin{align*}
        \overline{x} \sim_0 \overline{y} &:= \top \\
        \overline{x} \sim_{i+1} \overline{y} &:= \forall \overline{a}(\set_i(\overline{a}) \rightarrow \exists \overline{b} (\set_i(\overline{b}) \\
        &\land (\overline{a} \in \overline{x} \rightarrow \overline{b} \in \overline{y}) \land (\overline{a} \in \overline{y} \rightarrow \overline{b} \in \overline{x}) \land \overline{a} \sim_i \overline{b}))
    \end{align*}
    \item $\add_{i}(\overline{x}, \overline{y}, \overline{z})$: States that $\overline{x} = \overline{y} \cup \{\overline{z}\}$. Size linear in $i$. Assumes $\overline{x}$ and $\overline{y}$ encode sets of level $i$ and $\overline{z}$ encodes a set of level $i-1$.
    \begin{align*}
        \add_{i+1}(\overline{x}, \overline{y}, \overline{z}) &:= \forall \overline{a}((\set_{i}(\overline{a}) \land \overline{a} \in \overline{y}) \rightarrow \exists \overline{b}(\set_{i}(\overline{b}) \land \overline{b} \in \overline{x} \land \overline{a} \sim_{i} \overline{b})) \\
        &\land \exists \overline{c}(\set_{i}(\overline{c}) \land \overline{c} \in \overline{x} \land \overline{c} \sim_{i} \overline{z} \\
        &\land \forall \overline{d}((\set_{i}(\overline{d}) \land \overline{d} \in \overline{x} \land d_1 \neq c_1) \\
        &\rightarrow \exists \overline{e}(\set_{i}(\overline{e}) \land \overline{e} \in \overline{y} \land \overline{e} \sim_{i} \overline{d})))
    \end{align*}
    \item $V_i(\overline{x})$: $\overline{x}$ encodes the set $V_i$. Size $\mathcal{O}(i^2)$. Assumes the level partition is given.
    \begin{align*}
        V_0(\overline{x}) &:= \set_0(\overline{x}) \\
        V_{i+1}(\overline{x}) &:= \set_{i+1}(\overline{x}) \land \exists \overline{a} (\set_i(\overline{a}) \land \overline{a} \in \overline{x} \land S(a_1, a_2)) \\
        &\land \exists \overline{b} (V_i(\overline{b}) \land \overline{b} \in \overline{x} \land \forall \overline{c} \forall \overline{d}((\set_i(\overline{c}) \land \overline{c} \in \overline{x} \land \set_{i-1}(\overline{d}) \land \overline{d} \in \overline{b}) \\
        &\rightarrow \exists \overline{e} (\set_i(\overline{e}) \land \overline{e} \in \overline{x} \land \add_i(\overline{e},\overline{c},\overline{d}))))
    \end{align*}
    \item $\phi_i (x,y)$: Quantifies the level partition and states the subword from $x$ to $y$ encodes $V_i$. Size $\mathcal{O}(i^2)$.
    \begin{align*}
        \phi_i(x,y) &:= \exists D_0 \dots \exists D_i(\lev_i \land V_i(x,y)))
    \end{align*}
\end{itemize}

We now have a formula $\phi_i(x,y)$ that says the subword from $x$ to $y$ encodes the set $V_i$. There are still multiple words that satisfy this formula, since different orders of the sets and even repetition are still allowed. To pick out only one such word, we use a lexicographic order, where a shorter word always precedes a longer one.

Let $\phi'_i$ be the formula obtained from $\phi_i$ by replacing each occurrence of $L(x)$ with $P_1(x)$ and $R(x)$ with $P_2(x)$. We define the final formula $\psi_i$ of size $\mathcal{O}(i^2)$ that says the entire word model is the least word in the lexicographic order that satisfies the property of $\phi_i$. 
\begin{align*}
\psi_i &:= \exists x \exists y(\forall z (x \leq z \land z \leq y) \land \phi_i(x,y) \\
&\land \forall P_1\forall P_2 (\forall z ((P_1(z) \lor P_2(z)) \land \neg (P_1(z) \land P_2(z))) \\
&\land \forall y' ((y'<y \lor \exists z (\forall a(a < z \rightarrow (L(a) \leftrightarrow P_1(a) \land R(a) \leftrightarrow P_2(a))) \\
&\land (P_1(z) \land R(z))) \rightarrow \neg\phi'_i(x,y'))))
\end{align*}

We have used the lexicographic order here to select only one of the possible words that satisfy our property. Note that this can be done for any property. The size of such a formula will depend polynomially on the size of the alphabet, as well as linearly on the size of the formula defining the property in question.

We obtain the lower bound for the definability number as in the FO case.

\begin{theorem}
For some constant $c \in \Nset$ there are infinitely many $n \in \Nset$ satisfying
\[
\defN(\MSO[n]) \geq \tower(\sqrt{n/c}).
\]
\end{theorem}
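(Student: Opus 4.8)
The plan is to mirror the FO lower bound verbatim. For each $i\in\Nset$ the sentence $\psi_i$ of the preceding construction defines a single word $w_i$ which, being a brace code of $V_i$, has length at least $\tower(i)$; since $\size(\psi_i)=\mathcal O(i^2)$, I would fix a constant $c$ with $\size(\psi_i)\le c\cdot i^2$ for all $i$ and set $n=c\cdot i^2$, so that $w_i\in\Def(\MSO[n])$ and hence $\defN(\MSO[n])\ge|w_i|\ge\tower(i)=\tower(\sqrt{n/c})$. As $i$ ranges over $\Nset$ this yields infinitely many admissible $n$, which is exactly the claim.

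The main task is the correctness statement: \emph{$\psi_i$ is satisfied by exactly one word, which is a brace code of $V_i$}. I would prove, by simultaneous induction on $j\le i$, the following for a word $u$ carrying an interpretation of $D_0,\dots,D_i$ that satisfies $\lev_i$: (i) a pair $\overline x$ with $\set_j(\overline x)$ delimits a factor $lw'r$ in which every inner position is covered by a matched pair of strictly smaller level and the string between the braces is a concatenation of level-$(j-1)$ set codes; (ii) every level-$j$ set code codes some element of $V_{j+1}$, equivalently a subset of $V_j$, because its elements are coded by level-$(j-1)$ codes, which by the induction hypothesis code elements of $V_j$; (iii) $\overline x\sim_j\overline y$ holds iff $\overline x$ and $\overline y$ code the same set, and $\add_j(\overline x,\overline y,\overline z)$ holds iff the set coded by $\overline x$ is the set coded by $\overline y$ together with the one coded by $\overline z$; (iv) $V_j(\overline x)$ holds iff $\overline x$ codes $V_j$, possibly with repeated or reordered element codes. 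For (iv) the base case is immediate, and in the step one uses that $V_{j+1}=\mathcal P(V_j)$ is generated from $\emptyset$ by repeatedly adjoining elements of $V_j$: the conjunct $\exists\overline a(\set_j(\overline a)\wedge\overline a\in\overline x\wedge S(a_1,a_2))$ forces $\emptyset$ into the coded set, $\exists\overline b(V_j(\overline b)\wedge\overline b\in\overline x\wedge\cdots)$ places a code of $V_j$ as an element, and the closure condition built from $\add_j$ forces every subset of $V_j$ to appear, while (ii) guarantees nothing else does. It then follows that $\phi_i(x,y)=\exists D_0\cdots\exists D_i(\lev_i\wedge V_i(x,y))$ holds iff the factor from $x$ to $y$ is \emph{some} code of $V_i$, and $\phi_i'$ says the same with the brace predicates renamed to $P_1,P_2$. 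Finally $\psi_i$ asserts that the whole word is a code of $V_i$ and that no word strictly below it in the length-then-lexicographic order on $\Sigma^+$ is a code of $V_i$ (this is the role of the $\forall P_1\forall P_2$ block, the standard lexicographic-selection trick noted in the text); since codes of $V_i$ exist and that order restricted to them is a finite linear order, $\psi_i$ has a unique model.

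The remaining bounds are routine. For the length, $V_i$ has $\Ntower(i)$ distinct elements, so any brace code of it contains at least $\Ntower(i)$ pairwise disjoint element-factors, each of length at least $2$, whence $|w_i|\ge\Ntower(i)=\tower(i)$. For the size, $\set_j$ is of constant size; $\lev_i$ is $\mathcal O(i^2)$ because of its $\bigvee_{j=0}^i$, $\bigwedge_{j\ne k}$ and $\bigwedge_{j=0}^i$ blocks, each conjunct of constant or $\mathcal O(j)$ size; $\overline x\sim_j\overline y$ and $\add_j$ are $\mathcal O(j)$ by the obvious induction; $V_j$ is $\mathcal O(j^2)$ since the recursion adds an $\mathcal O(j)$ block with one occurrence of $\add_j$ at each of the $j$ steps; prefixing $\exists D_0\cdots\exists D_i$ keeps $\phi_i$ at $\mathcal O(i^2)$; and $\psi_i$ only wraps two copies of $\phi_i$ in a constant-size monadic second-order context, so $\size(\psi_i)=\mathcal O(i^2)$. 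I expect the genuine work to lie entirely in the second step, and specifically in isolating and verifying the correct simultaneous induction hypothesis (i)--(iv): one must check that $\lev_i$ really does force the intended bracket matching (the predicate $D_j$ marking exactly the braces at depth $j$, with no stray unmatched braces and no wrong-level nesting), and that, once this matching is in place, $\add_j$ and $\sim_j$ have their extensional meanings. Everything past that point, including the lexicographic selection, parallels the FO argument and is standard.
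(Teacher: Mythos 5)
Your proposal follows exactly the paper's argument: the paper likewise deduces the theorem directly from the construction, observing that $\psi_i$ has size $\mathcal{O}(i^2)$ and defines a unique word that encodes $V_i$ and hence has length at least $\tower(i)$, and then setting $n=c\cdot i^2$. The correctness induction you sketch (the level partition forced by $\lev_i$, the extensional meanings of $\sim_j$ and $\add_j$, the characterization of $V_j(\overline{x})$, and the lexicographic selection) is precisely what the paper leaves implicit, so your write-up is, if anything, more detailed than the original.
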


We get the same bounds for $\LS(\MSO[n])$ and $\Hanf(\MSO[n])$ via Proposition \ref{DNvsLS}.

\begin{corollary}
For some constant $c \in \Nset$ there are infinitely many $n \in \Nset$ satisfying
\[
\LS(\MSO[n]), \Hanf(\MSO[n]) \geq \tower(\sqrt{n/c}).
\]
\end{corollary}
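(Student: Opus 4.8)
The plan is to mirror the FO argument almost verbatim, now that all the syntactic work has been done in constructing the MSO formulas $\psi_i$. First I would observe that by the construction of $\psi_i$, this sentence defines exactly one word $w_i \in \Sigma^+$: the statement $\phi_i(x,y)$ forces the subword between the endpoints (and hence, by the $\forall z(x\le z\land z\le y)$ clause, the whole word) to be \emph{some} encoding of $V_i$, and the outer block with the quantified predicates $P_1,P_2$ forces $w$ to be the lexicographically least such encoding, where a shorter word precedes a longer one. Uniqueness of a lexicographically least element of a nonempty finite set of words of bounded length gives $\Def(\psi_i)=\{w_i\}$, i.e. $w_i\in\Def(\MSO[\size(\psi_i)])$.

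Next I would pin down two quantities. On the one hand, since $w_i$ is an encoding of the set $V_i$ using the two brace symbols, and $|V_i|\ge\tower(i)$ (indeed $|V_{i+1}|=2^{|V_i|}$ with $|V_0|=1$, counting elements), any word encoding $V_i$ — where each set contributes at least its pair of outer braces — has length at least $\tower(i)$; hence $|w_i|\ge\tower(i)$. On the other hand, tracing through the recursive size bounds listed for the subformulas ($\set_i$ constant, $\overline x\in\overline y$ constant, $\overline x\sim_i\overline y$ linear, $\add_i$ linear, $\lev_i$ and $V_i$ of size $\mathcal O(i^2)$, and the lexicographic wrapper adding only a constant-times-linear overhead as noted after the definition of $\psi_i$), we get $\size(\psi_i)\le c\cdot i^2$ for some absolute constant $c\in\Nset$.

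Then I would combine these: setting $n=c\cdot i^2$, we have $w_i\in\Def(\MSO[n])$ with $|w_i|\ge\tower(i)=\tower(\sqrt{n/c})$, so $\defN(\MSO[n])\ge\tower(\sqrt{n/c})$. Since this works for every $i\in\Nset$, it yields infinitely many such $n$ (one for each value of $i$), which is exactly the claim. The corollary for $\LS$ and $\Hanf$ is then immediate from Proposition \ref{DNvsLS}, as in the FO case: the single defining sentence $\psi_i$ witnesses a model of length $|w_i|$ as both its unique smallest and unique largest model.

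The main obstacle is not in this final bookkeeping step but upstream: one must actually verify that the subformulas do what the bullet list claims — in particular that $\lev_i$ correctly forces the quantified $D_j$'s to be a genuine level-partition with matched braces, that $\sim_i$ correctly captures extensional equality of encoded sets (using the level-indexed recursion rather than the more expensive $\in_i$/$\set_i$ machinery of the FO version), and that $V_i(\overline x)$ correctly builds $V_{i+1}$ from $V_i$ via $\add_i$. Granting the correctness and size bounds of these components — which is where the real content of the section lies — the theorem itself follows by the short counting argument above.
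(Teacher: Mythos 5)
Your proposal is correct and follows exactly the paper's route: establish the definability-number lower bound $\defN(\MSO[n])\ge\tower(\sqrt{n/c})$ from the single-word-defining sentence $\psi_i$ of size $\mathcal{O}(i^2)$, then transfer it to $\LS$ and $\Hanf$ via Proposition~\ref{DNvsLS}, since the unique model of $\psi_i$ is both its shortest and longest model. The paper's own proof of this corollary is precisely that one-line appeal to Proposition~\ref{DNvsLS}, and your added bookkeeping (and your caveat that the real content lies in verifying the subformulas $\lev_i$, $\sim_i$, $V_i$) matches the surrounding text.
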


\section{Conclusion}

We considered the definability number, the Löwenheim-Skolem number  and the Hanf number on words in the size $n$ fragments of first-order logic and monadic second-order logic. We obtained exponential towers of various heights as upper and lower bounds for each of these numbers. 

For $\FO$, we obtained the bounds
\[
\tower(\sqrt[5]{n/c}) \leq \defN(\FO[n]) \leq \tower(n/2+\log^*((n/2)^2+n/2)+1)
\]
for some constant $c$. As corollaries, we obtained the same bounds for $\LS(\FO[n])$ and $\Hanf(\FO[n])$. In addition, by modifying the formula we used for the lower bounds, we obtained a slightly better lower bound of $\tower(\sqrt[3]{n/c})$ for $\LS(\FO[n])$.

In the case of $\MSO$, the bounds are similarly
\[
\tower(\sqrt{n/c}) \leq \defN(\MSO[n]) \leq \tower(n/2+\log^*((n/2+1)^2)+1)
\]
for a different constant $c$. We again immediately obtained the same bounds for $\LS(\MSO[n])$ and $\Hanf(\MSO[n])$.

The gaps between the lower bounds and upper bounds we have proved are quite big. In absolute terms, they are actually huge, as each upper bound is non-elementary with respect to the corresponding lower bound. However, it is more fair to do the comparison in the iterated logarithmic scale, which reduces the gap to be only polynomial. Nevertheless, a natural task for future research is to look for tighter lower and upper bounds. 

Finally, we remark that the technique for proving an exponential tower upper bound for the number of types in the quantifier rank fragments of some logic $\mathcal{L}$ is completely generic: it works in the same way irrespective of the type of quantifiers allowed in $\mathcal{L}$. Thus, it can be applied for example in the case where $\mathcal{L}$ is the extension of $\FO$ with some generalized quantifier (or a finite set of generalized quantifiers).  Assuming further that the quantifier rank fragments $L$ of $\mathcal{L}$ satisfy Theorem \ref{preservation}, we can obtain this way an exponential tower upper bound for the numbers $\defN(L)$, $\LS(L)$ and $\Hanf(L)$. On the other hand, note that if the quantifier rank fragments $L$ satisfy Theorem \ref{preservation}, then each $\equiv_L$ is an invariant equivalence relation, whence $\mathcal{L}$ can only define regular languages. Therefore it seems that our technique for proving upper bounds cannot be used for logics with expressive power beyond regular languages.

\bibliographystyle{splncs04}
\bibliography{sw}

\section{Appendix}

In this appendix, we prove the upper bounds for the numbers $N_L$ of $L$-types for both $L=\FO_k$ and $L=\MSO_k$ stated in Section \ref{number-of-types}. To do this, we need to consider the equivalence $\equiv_L$ with respect to formulas with free variables. If the number of free second-order variables is $r$, and the number of free first-order variables is $s$, this means that each word $w\in\Sigma^+$ has to be equipped with corresponding interpretations $\bar P=(P_1,\ldots,P_r)$ and $\bar p=(p_1,\ldots,p_s)$ of the variables. We call the triple $I=(w,\bar P,\bar p)$ an $(r,s)$-interpretation. For $L=\MSO_k$ and $m\in\Nset$, we define 
\[
	M_k(m)=\sum_{r+s=m}O_k(r,s),
\]
where $O_k(r,s)$, $r,s\in\Nset$ denotes the number of $\MSO_k$-types of $(r,s)$-interpretations. 

In the case $L=\FO_k$, second-order parameters $\bar P$ are not needed, and we call the pair $(w,\bar p)$ an $s$-interpretation. The number of $\FO_k$ -types of $s$-interpretations is denoted by $F_k(s)$.

\begin{lemma}\label{zero-types}
\begin{enumerate}[(a)]
\item For any $m\in \Nset$, $M_0(m)\le 2^{(m+1)^2}$.

\item For any $s\in \Nset$, $F_0(s)\le 2^{s^2+s}$.
\end{enumerate}
\end{lemma}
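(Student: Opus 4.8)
The plan is to bound the number of quantifier-rank-$0$ types by a direct count of the atomic information an interpretation carries. The key observation is that every $\MSO_0$-formula is a Boolean combination of atomic formulas, so the $\MSO_0$-type of an $(r,s)$-interpretation is determined by, and in turn determines, the truth values of all atoms whose free variables lie among the parameters; hence $O_0(r,s)$ is at most the number of jointly realizable truth assignments to those atoms. The same remark gives $F_0(s)$ for the first-order case, where there are no second-order parameters.

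First I would fix an $(r,s)$-interpretation $I=(w,\bar P,\bar p)$ with $\bar P=(P_1,\dots,P_r)$ and $\bar p=(p_1,\dots,p_s)$, and sort the relevant atoms into three groups: the letter atoms $P_l(p_i),P_r(p_i)$; the order atoms $p_i<p_j$ and $p_i=p_j$; and the membership atoms $P_j(p_i)$. The point is to use the structure of word models to beat the crude bound $2^{(\#\,\text{atoms})}$: since $P_l$ and $P_r$ partition the universe of $w$, the letter atoms contribute just one bit per index $i$, i.e.\ a factor $2^{s}$; since $<$ is a linear order, the order atoms merely record, for each of the $\binom{s}{2}$ pairs $\{i,j\}$, one of the three alternatives $p_i<p_j$, $p_i=p_j$, $p_j<p_i$, giving at most $3^{\binom{s}{2}}\le 2^{s(s-1)}=2^{s^2-s}$ possibilities (because $\sqrt3<2$); and the membership atoms contribute a factor $2^{rs}$. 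Multiplying yields $O_0(r,s)\le 2^{s}\cdot 2^{s^2-s}\cdot 2^{rs}=2^{s^2+rs}$.

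Part (b) is then the case $r=0$: the $\FO_0$-types of $s$-interpretations are exactly the $\MSO_0$-types of $(0,s)$-interpretations, since with no second-order parameters and no quantifiers the two fragments coincide, so $F_0(s)=O_0(0,s)\le 2^{s^2}\le 2^{s^2+s}$. For part (a) I would sum over $r+s=m$:
\[
M_0(m)=\sum_{r+s=m}O_0(r,s)\le\sum_{s=0}^{m}2^{\,s^2+(m-s)s}=\sum_{s=0}^{m}2^{ms}\le (m+1)\,2^{m^2}\le 2^{m}\cdot 2^{m^2}=2^{m^2+m}\le 2^{(m+1)^2},
\]
using $m+1\le 2^{m}$ and $m^2+m\le(m+1)^2$ for all $m\in\Nset$; the degenerate case $m=0$, where there are no atoms and $M_0(0)=1\le 2$, is covered by the same chain of inequalities.

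The only step that needs genuine care is the bookkeeping in the middle paragraph: one must check that every atomic formula over $\{<,P_l,P_r\}$ together with the second-order parameters, with free variables among $\bar p$, has been accounted for, and then actually exploit the word-model constraints — the partition of the universe into $P_l$ and $P_r$, and the trichotomy of $<$ — rather than estimating crudely by $2^{(\#\,\text{atoms})}$, which would already be too large to yield the $2^{(m+1)^2}$ bound in (a). Everything after that is routine arithmetic.
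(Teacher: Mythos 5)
Your proposal is correct and follows essentially the same approach as the paper: both bound the number of quantifier-rank-$0$ types by counting the possible atomic data (letters at the parameter positions, their relative order, and membership in the second-order parameters) and then sum over $r+s=m$. The only differences are minor bookkeeping refinements — you bound the order information by trichotomy as $3^{\binom{s}{2}}\le 2^{s^2-s}$ where the paper uses the cruder $2^{s^2}$ for binary relations, and you estimate the final sum by $(m+1)2^{m^2}$ where the paper evaluates a geometric series — neither of which changes the argument in substance.
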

\begin{proof}
(a) Consider the quantifier free types of $(r,s)$-interpretations $(w,\bar P,\bar p)$. There are $2^s$ ways of choosing the letters of $\Sigma=\{l,r\}$ to the positions in $\bar p$, and at most $2^{s^2}$ ways of choosing the order\footnote{The order is a binary relation, and there are $2^{s^2}$ binary relations on a set of $s$ elements.} of the components of $\bar p$. In addition, there are $(2^r)^s = 2^{rs}$ choices regarding in which of the sets in $\bar P$ the points in $\bar p$ are in. In total there are at most $2^s \cdot  2^{s^2}\cdot 2^{rs}=2^{s(r+s+1)}$ equivalence classes of $\equiv_{\MSO_0}$ for $(r,s)$-interpretations. Thus, 
$$
	M_0(m)\le\sum_{s=0}^{m}2^{s(m+1)}
	=\frac{2^{(m+1)^2}-1}{2^{m+1}-1}
	\le 2^{(m+1)^2}, 
$$
as $2^{m+1}-1\ge 1$.

(b) Clearly $F_0(s)=O_0(0,s)\le 2^{s}\cdot 2^{s^2}=2^{s^2+s}$.
\end{proof}

\begin{lemma}\label{k-types}
\begin{enumerate}[(a)]
\item For any $k,m\in \Nset$, 
$M_{k+1}(m)\le 2^{M_k(m+1)}$.

\item For any $k,s\in \Nset$, $F_{k+1}(s)\le 2^{F_k(s+1)}$.
\end{enumerate}
\end{lemma}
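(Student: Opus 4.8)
The plan is to prove both parts by the standard one-step type-counting argument, using the Ehrenfeucht--Fra\"{\i}ss\'{e} characterization (cf.\ Proposition 2.1.4 in \cite{EbbFlum}): the $\MSO_{k+1}$-type of an interpretation is determined by the set of $\MSO_k$-types realized by its one-parameter extensions. Concretely, for an $(r,s)$-interpretation $I=(w,\bar P,\bar p)$ put
\begin{align*}
S_I ={}& \{\tp_{\MSO_k}(w,\bar P Q,\bar p)\mid Q\subseteq\mathrm{dom}(w)\} \\
{}\cup{}& \{\tp_{\MSO_k}(w,\bar P,\bar p\,q)\mid q\in\mathrm{dom}(w)\}.
\end{align*}
On the one hand $S_I$ is determined by $\tp_{\MSO_{k+1}}(I)$: if $\chi_\sigma\in\MSO_k$ isolates the $\MSO_k$-type $\sigma$, then $\sigma\in S_I$ iff $I\models\exists U\,\chi_\sigma$ (resp.\ $I\models\exists x\,\chi_\sigma$), an $\MSO_{k+1}$-property of $I$. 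On the other hand, since every $\MSO_{k+1}$-sentence over the vocabulary of $I$ is a Boolean combination of sentences $\exists U\,\psi$ and $\exists x\,\psi$ with $\psi\in\MSO_k$ (the quantifier-free part being subsumed, as it can be read off any point-extension type), $S_I$ determines $\tp_{\MSO_{k+1}}(I)$. Thus $\tp_{\MSO_{k+1}}(I)$ and $S_I$ determine each other.

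For part (b) this already closes the argument. In the first-order case there are no set parameters, so for a fixed $s$ the $\FO_{k+1}$-type of an $s$-interpretation $(w,\bar p)$ is determined by $\{\tp_{\FO_k}(w,\bar p\,q)\mid q\in\mathrm{dom}(w)\}$, which is a subset of the set of all $\FO_k$-types of $(s+1)$-interpretations, a set of size $F_k(s+1)$. Hence there are at most $2^{F_k(s+1)}$ distinct $\FO_{k+1}$-types of $s$-interpretations, i.e.\ $F_{k+1}(s)\le 2^{F_k(s+1)}$.

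For part (a) the subtlety is that $M_{k+1}(m)=\sum_{r+s=m}O_{k+1}(r,s)$ ranges over all $m+1$ splits, and treating set-extensions and point-extensions separately only yields $O_{k+1}(r,s)\le 2^{O_k(r+1,s)+O_k(r,s+1)}$, hence $M_{k+1}(m)\le (m+1)2^{M_k(m+1)}$ after summing. To remove this polynomial factor I will view $S_I$ as a single subset of the set $\mathcal{T}_{m+1}$ of all $\MSO_k$-types of $(r',s')$-interpretations with $r'+s'=m+1$, a set of size exactly $M_k(m+1)$; every one-parameter extension of an $(r,s)$-interpretation with $r+s=m$ has $m+1$ parameters, so $S_I\subseteq\mathcal{T}_{m+1}$. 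The key point is that $S_I$ also recovers the split $(r,s)$: since $w\in\Sigma^+$ has at least one position and one may always adjoin the empty set, $S_I$ contains a type of arity $(r+1,s)$ and a type of arity $(r,s+1)$, and these are the only two arities occurring among members of $S_I$; as $(r+1,s)\neq(r,s+1)$ always, the pair $(r,s)$ is recovered from the two-element set of arities appearing in $S_I$. Consequently $\tau\mapsto S_I$ (for any $I$ realizing $\tau$) is a well-defined injection from the $\MSO_{k+1}$-types of $m$-parameter interpretations into the power set of $\mathcal{T}_{m+1}$, giving $M_{k+1}(m)\le 2^{M_k(m+1)}$.

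The main obstacle is precisely this bookkeeping in part (a): arranging the argument so it yields the clean bound $2^{M_k(m+1)}$ rather than $(m+1)2^{M_k(m+1)}$ or $2^{2M_k(m+1)}$. It is resolved by the two observations above --- collecting set-extensions and point-extensions into the single ambient set $\mathcal{T}_{m+1}$, and noting that the arities occurring in $S_I$ already encode the split $(r,s)$. Everything else is the routine verification that the $\MSO_{k+1}$ (resp.\ $\FO_{k+1}$) normal form makes $S_I$ a complete invariant of $\tp_{\MSO_{k+1}}(I)$.
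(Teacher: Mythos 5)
Your proposal is correct and follows essentially the same route as the paper: the set $S_I$ is exactly the paper's $\mathcal{A}(I)\cup\mathcal{B}(I)$, viewed as a single subset of the union $\mathcal{C}^{m+1}$ of all type classes of arity $m+1$, and your observation that the arities occurring in $S_I$ recover the split $(r,s)$ is the same device the paper uses (via nonemptiness of $\mathcal{A}(I)$ and $\mathcal{B}(I)$ and disjointness of the arity classes) to avoid the extra $(m+1)$ factor. No gaps.
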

\begin{proof}
(a) For $L=\MSO_k$ and $r,s\in\Nset$, let $\mathcal{C}_{r,s}$ be the set of all the $\equiv_L$ equivalence classes of $(r,s)$-interpretations. Given an $(r,s)$-interpretation $I=(w,\bar P,\bar p)$, we define 
$$
	\mathcal{A}(I)=\{C\in \mathcal{C}_{r+1,s}\mid \exists P_{r+1}\subseteq [w]: 
	(w,\bar P P_{r+1},\bar p)\in C\},
$$
and similarly
$$
	\mathcal{B}(I)=\{C\in\mathcal{C}_{r,s+1}\mid \exists p_{s+1}\in [w]: (w,\bar P,\bar p\, p_{s+1})\in C\}.
$$ 

It is now straightforward to verify that $I\equiv_{\MSO_{k+1}} I'$ for two $(r,s)$-inter\-pretations $I$ and $I'$ if and only if $\mathcal{A}(I)\cup\mathcal{B}(I)=\mathcal{A}(I')\cup\mathcal{B}(I')$.  Furthermore, if $(r,s)\not=(r',s')$, then
$\mathcal{A}(I)\cup\mathcal{B}(I)\not=\mathcal{A}(I')\cup\mathcal{B}(I')$ for any $(r,s)$-interpretation $I$ and $(r',s')$-interpretation $I'$. Indeed, if $r<r'$, then $\mathcal{B}(I)$ is a nonempty subset of $\mathcal{C}_{r,s+1}$, and $(\mathcal{A}(I')\cup\mathcal{B}(I'))\cap\mathcal{C}_{r,s+1}=\emptyset$. Similarly, if $s<s'$, then $\mathcal{A}(I)$ is a nonempty subset of $\mathcal{C}_{r+1,s}$, and $(\mathcal{A}(I')\cup\mathcal{B}(I'))\cap\mathcal{C}_{r+1,s}=\emptyset$.

Thus we see that the $\equiv_{\MSO_{k+1}}$ equivalence class of any $(r,s)$-interpretion with $r+s=m$ is uniquely determined by the set $\mathcal{A}(I)\cup\mathcal{B}(I)\subseteq\mathcal{C}^{m+1}$, where $\mathcal{C}^{m+1}$ is the union of the sets $\mathcal{C}_{u,v}$ over pairs $(u,v)$ such that $u+v=m+1$. Observe now that 
$$
	|\mathcal{C}^{m+1}|=\Bigl|\bigcup_{u+v=m+1}\mathcal{C}_{u,v}\Bigr|=
	\sum_{u+v=m+1}|\mathcal{C}_{u,v}|=M_k(m+1),
$$
as clearly $|\mathcal{C}_{u,v}|=O_k(u,v)$. Thus we obtain the desired upper bound:
$$
	M_{k+1}(m)\le |\mathcal{P}(\mathcal{C}^{m+1})|=2^{|\mathcal{C}^{m+1}|}=2^{M_k(m+1)}.
$$

(b) The proof is similar to that of (a).
\end{proof}

\begin{customthm}{2}
For any $k\in\Nset$, 
$N_{\MSO_k}\le \tower(k+\log^*((k+1)^2)+1)$\\ and
$N_{\FO_k}\le \tower(k+\log^*(k^2+k)+1)$.
\end{customthm}
\begin{proof}
We prove by induction on $k$ that, for all $m\in\Nset$, 
$$
	M_k(m)\le \tower(k+\log^*((k+m+1)^2)+1).
$$
In the case $k=0$ this follows from Lemma \ref{zero-types}, since 
$$
	2^{(m+1)^2}\le \tower(\log^*((m+1)^2)+1).
$$

Assume then as an inductive hypothesis that 
$$
	M_k(m+1)\le \tower(k+\log^*((k'+1)^2)+1)
$$
for $k'=k+m+1$. Then by Lemma \ref{k-types} we get 
\begin{align*}
	M_{k+1}(m)&\le 2^{\tower(k+\log^*((k'+1)^2)+1)}\\
	&=\tower(k+1+\log^*((k'+1)^2)+1)\\
	&=\tower(k+1+\log^*((k+1+m+1)^2)+1),
\end{align*}
as desired. 

Note that $N_{\MSO_k}=M_k(0)$. Thus, applying the inequality above for $m=0$, we obtain $N_{\MSO_k}\le \tower(k+\log^*((k+1)^2)+1)$. 

The second claim is proved in the same way.
\end{proof}

\end{document}